\newtheoremstyle{mydef}
  {3pt}
  {3pt}
  {}
  {}
  {\bfseries}
  {.}
  {1em}
  {\textbf{Definition \thedefinition.} \quad \textit{#1}}  
\theoremstyle{mydef}
\newtheorem{definition}{Semi-honest Secure}  
\renewcommand{\thedefinition}{\arabic{definition}}
\newtheoremstyle{mythe}
  {3pt}      
  {3pt}      
  {}         
  {}         
  {\bfseries}
  {.}        
  {1em}      
  {\textbf{Theorem \thetheorem}}  
\theoremstyle{mythe}
\newtheorem{theorem}{}
\begin{document}
%
\title{Practical Traceable Over-Threshold Multi-Party Private Set Intersection}

\author{\IEEEauthorblockN{Le Yang\IEEEauthorrefmark{1},
Weijing You\IEEEauthorrefmark{2}\textsuperscript{\Letter},
Huiyang He\IEEEauthorrefmark{1}, 
Kailiang Ji\IEEEauthorrefmark{3} and
Jingqiang Lin\IEEEauthorrefmark{1}\textsuperscript{\Letter} }
\IEEEauthorblockA{\IEEEauthorrefmark{1}School of Cyber Science and Technology, University of Science and Technology of China}
\IEEEauthorblockA{\IEEEauthorrefmark{2}Fujian Provincial Key Laboratory of Network Security and Cryptology,\\College of Computer and Cyber Security, Fujian Normal University}
\IEEEauthorblockA{\IEEEauthorrefmark{3}NIO Inc.}
\{leyang, hhe\}@mail.ustc.edu.cn, youweijing@fjnu.edu.cn, \\kailiang.ji@nio.com, linjq@ustc.edu.cn}
	

%


\maketitle

\begin{abstract}
Multi-Party Private Set Intersection (MP-PSI) with threshold enhances the flexibility of MP-PSI by disclosing elements present in at least $t$ participants' sets, rather than requiring elements to appear in all $n$ sets.
In scenarios where each participant is responsible for its dataset, e.g., digital forensics, MP-PSI with threshold is expected to disclose both intersection elements and corresponding holders such that elements are traceable and hence the reliability of intersection is guaranteed. We refer to MP-PSI with threshold supporting traceability as Traceable Over-Threshold Multi-Party Private Set Intersection (T-OT-MP-PSI). However, research on such protocols remains limited, and the current solution is resistant to $t-2$ semi-honest participants at the cost of considerable computational overhead.

In this paper, we propose two novel Traceable OT-MP-PSI protocols. The first protocol is the \underline{E}fficient \underline{T}raceable OT-MP-PSI (ET-OT-MP-PSI), which combines Shamir's secret sharing with oblivious programmable pseudorandom function, achieving significantly improved efficiency with resistance to at most $t-2$ semi-honest participants.
The second one is the \underline{S}ecurity-enhanced \underline{T}raceable OT-MP-PSI (ST-OT-MP-PSI), which achieves security against up to $n-1$ semi-honest participants by further leveraging oblivious linear evaluation protocol. 

Compared to the recent Traceable OT-MP-PSI protocol by Mahdavi et al., our protocols eliminate the security assumption that certain special parties do not collude and provide stronger security guarantees.
We implemented our proposed protocols and conducted extensive experiments under various settings. 
We compared the performance of our protocols with that of  Mahdavi et al.'s protocol.
While our Traceable OT-MP-PSI protocols enhance security, experimental results demonstrate high efficiency. For instance, given 5 participants, with threshold of 3, and set sizes are $2^{14}$, our  ET-OT-MP-PSI protocol is 15056$\times$ faster, and the ST-OT-MP-PSI is 505$\times$ faster, compared to Mahdavi et al.'s protocol.
\end{abstract}


%
\IEEEpeerreviewmaketitle

\section{Introduction}

Multi-Party PSI (MP-PSI) allows three or more participants, each holding a private set, to learn nothing but the intersection of their sets. Currently, MP-PSI is widely used in privacy-sensitive domains, such as cache sharing in edge computing \cite{nguyen2022mpccache}, federated learning \cite{lu2020multi,elkordy2022federated}, and anomaly detection \cite{kolesnikov2017practical,inbar2018efficient,ghosh2019algebraic}. Implicitly, typical MP-PSI only identifies elements that are possessed by all participants. For participants interested in computing the intersection over elements held not necessarily by all, but by at least a predefined number of participants, we have the MP-PSI with threshold.

Intuitively, MP-PSI with threshold should provide ideal privacy protection, and hence most existing MP-PSI with threshold protocols \cite{kissner2004private, bay2021practical,chandran2021efficient} are fully anonymous, i.e., they reveal nothing but the intersection itself. However, in certain scenarios, full anonymity may be problematic:

\begin{itemize}[leftmargin=*]
         \item \textit{Network Anomaly Detection:} In distributed environments, anomaly detection systems deployed at different nodes independently report suspicious behaviors. 
Previous works \cite{valdes2001probabilistic,lippmann20001999} have demonstrated that the traceability of alerts across distributed detectors significantly improves anomaly attribution and operational response. However, conventional MP-PSI with threshold lacks such a capability, thereby impeding accurate localization and coordinated mitigation.
        
        \item \textit{Digital Forensics Investigation:} In digital forensics, the evidence is distributed among multiple entities. Studies \cite{li2023anonymous,selamat2011traceability,selamat2013forensic} emphasize that establishing the provenance of digital evidence is essential for constructing reliable evidence chains, ensuring admissibility in court, and coordinating multi-agency investigations.
 Although conventional MP-PSI with threshold typically lacks traceability, this prevents investigators from identifying which parties hold the evidence.
        
        \item \textit{Suspicious Account Analysis:} Taking anti-money laundering (AML) as an example, Kings Research projects that the global AML market will reach \$9.692 billion by 2031 \cite{kingresearch2024aml}. As part of practical AML initiatives, the Hong Kong Monetary Authority's AMLab leverages network analysis to identify mule accounts. However, the lack of traceability in fully anonymous MP-PSI with threshold hinders effective cross-institutional collaboration, which is explicitly emphasized as a critical requirement in AML guidelines \cite{chatain2009preventing}.
\end{itemize}

We observe that, when participants are accountable for their private sets, which are contributed to the community decision, traceability for each intersection element becomes necessary. Therefore, Over-Threshold MP-PSI with traceability was introduced \cite{mahdavi2020practical}, extending  conventional MP-PSI with threshold by not only identifying elements appearing in at least $t$ participants' sets, but also disclosing the identities of the parties holding each intersecting element. However, the existing solution \cite{mahdavi2020practical} remains limited in both efficiency and security.
Specifically, in terms of security, it only resists collusion among up to $t-2$ semi-honest participants under the assumption that certain special parties do not collude.
In terms of efficiency, its computational complexity is $O(m(n\log(\frac{m}{t}))^{2t})$, which grows exponentially with the threshold $t$, where $n$ is the total number of parties and $m$ is the set size.
Furthermore, empirical results demonstrate poor performance in practice. For example, when the number of participants is $n=10$, the threshold is $t=7$, and the set size is $2^5$, the runtime of their protocol exceeds 9 hours in our experiments.


In this paper, we refer to this functionality as  \textbf{\underline{T}raceable \underline{O}ver-\underline{T}hreshold \underline{M}ulti-\underline{P}arty \underline{P}rivate \underline{S}et \underline{I}ntersection (T-OT-MP-PSI)} and propose two protocol instantiations: the \underline{E}fficient \underline{T}raceable OT-MP-PSI (ET-OT-MP-PSI) and the \underline{S}ecurity-enhanced \underline{T}raceable OT-MP-PSI (ST-OT-MP-PSI). The ET-OT-MP-PSI achieves high efficiency with resistance to $t-2$ collusion among participants, while the ST-OT-MP-PSI is secure against collusion by up to $n-1$ participants in the semi-honest adversary model. Here, $t$ denotes the predefined threshold for intersection computation, and $n$ denotes the total number of participants.

\subsection{ The high-level idea of our protocols }

To enable a Traceable OT-MP-PSI, we first consider two basic functionalities: revealing elements and corresponding holders only when the number of holders is at least the threshold. We adopt Shamir's secret sharing, which is also utilized in Mahdavi et al. \cite{mahdavi2020practical}. In our design, each intersection element is treated as ``secret'' to be shared and can be reconstructed when enough ``shares'' from participants are collected. 
To preserve privacy, the intersection elements should not be revealed during such a secret sharing process. Therefore, we  incorporate the Oblivious Programmable Pseudorandom Function (OPPRF) to transmit ``shares''. The OPPRF ensures that only participants with the same element receive the correct ``share'', whereas others receive random values.

Each participant may have elements that are not in the final intersection. Note that in the MP-PSI with threshold, only elements possessed by enough participants are revealed, while those appearing in fewer sets should remain private. 
However, given Shamir's secret sharing and OPPRF, it is easy to identify elements in common by simply observing the ``shares'' from different participants, even if those elements do not belong to the intersection.
This problem was addressed by Mahdavi et al. \cite{mahdavi2020practical} using homomorphic encryption, which results in high computational cost. In this work, inspired by Herzberg et al. \cite{herzberg1995proactive}, we update the element ``shares'' with additional shares of zero-value secret which does not modify the underlying secret nor the threshold value.

\noindent\textbf{Efficient Traceable OT-MP-PSI.} Based on Shamir's secret sharing, OPPRF and secret shares update, we first propose an Efficient Traceable OT-MP-PSI (ET-OT-MP-PSI) protocol. It is initialized by one participant, who then interacts with the other participants to derive the final intersection in the following three phases: 1) \textit{conditional secret sharing,} in which elements in the set are transformed into secret shares and securely transmitted among all participants using OPPRF; 2) \textit{secret shares update,} in which each participant individually updates their own shares using new shares of zero-value secret before collection; 3) \textit{conditional collection and reconstruction,} in which OPPRF is used again by the leader participant to collect shares, and the elements are reconstructed to obtain the intersection and corresponding holders. 

ET-OT-MP-PSI is an extension of MP-PSI (CCS'17) \cite{kolesnikov2017practical} to MP-PSI with threshold. However, based on the Lagrange Interpolation theorem, this extension is only secure against collusion among up to $t-2$ participants. Nevertheless, for stricter security requirements in practice, \textit{is it possible to remove the dependence between security and the threshold?}

\noindent\textbf{Security-enhanced Traceable OT-MP-PSI.} In the ET-OT-MP-PSI, participants are able to access shares directly derived from threshold Shamir's secret sharing. Consequently, if more than $t-2$ corrupted participants collude, they can infer the private information of honest parties, thereby compromising the security of the protocol. To address this, we introduce the Oblivious Linear Evaluation (OLE) protocol to enable a three-party interaction during the shares update phase, thereby further imposing requirements on the elements held by participants. That is, successful reconstruction requires not only collecting enough shares, but also ensuring that a sufficient number of parties hold the same element. Hence the resistance of our Traceable OT-MP-PSI to semi-honest adversaries is extended from $t-2$ to $n-1$.

\subsection{Our Contributions}
    Our contributions can be summarized as follows:
    \begin{itemize}[leftmargin=*]
        \item \textbf{Efficient Traceable OT-MP-PSI.} We revisit full anonymity and traceability of MP-PSI with threshold from a practical perspective, and propose an Efficient Traceable OT-MP-PSI protocol (ET-OT-MP-PSI) based on threshold Shamir's secret sharing and OPPRF. The ET-OT-MP-PSI is efficient and resistant to collusion among up to $t-2$ semi-honest participants.
        
        \item \textbf{Security-enhanced Traceable OT-MP-PSI.} We introduce a Security-enhanced Traceable OT-MP-PSI protocol (ST-OT-MP-PSI), which incorporates the OLE protocol into the ET-OT-MP-PSI. This protocol could tolerate at most $n-1$ semi-honest adversaries at a modest performance cost.
        
        \item \textbf{Security analysis, implementation and performance evaluation.} 
        We conduct a security analysis of the two proposed protocols. In addition, we implement and evaluate them under various experimental settings. Experimental results demonstrate that both of our protocols outperform the recent work with similar functionality.

    \end{itemize}

\section{Preliminaries}

\subsection{Notations}
\begin{table}[H]
\renewcommand{\arraystretch}{0.95}
    \centering
    \label{tab:notations}
    
    \begin{tabular}{@{\centering}p{0.2\columnwidth}p{0.66\columnwidth}@{}}
    \toprule
    \multicolumn{1}{c}{{\textbf{Notations}}} & \multicolumn{1}{c}{\textbf{Descriptions}} \\ 
    \midrule
    \multicolumn{1}{c}{$n$} & The number of parties \\ 
    \multicolumn{1}{c}{$m$} & Set size of each party \\ 
    \multicolumn{1}{c}{$t$} & The threshold value \\ 
    \multicolumn{1}{c}{$[c,d]$ or $[a]$} & Denotes the set $\{c,c+1,...,d\}$ or $\{0,1,...,a\}$ \\
    \multicolumn{1}{c}{$P_i$} & The party with index $i,i \in [n]$ \\
    \multicolumn{1}{c}{$S_i$} & The set of party $P_i$, i.e., $S_i=\{e_0^i,\dots,e_{m-1}^i\}$ \\
    \multicolumn{1}{c}{$m_b$} & The size of the Simple or Cuckoo hashing table \\
    \multicolumn{1}{c}{$\lambda$} & The statistical security parameter \\
    \multicolumn{1}{c}{$\kappa$} & The computational security parameter \\
    \multicolumn{1}{c}{$B_S[b]$ or $B_C[b]$} & The  $b^{\text{th}}$ bin of Simple or Cuckoo hashing table\\
    \multicolumn{1}{c}{$x\leftarrow \mathbb{F}_p$} & $x$ is sampled uniformly over the field $\mathbb{
    F}_p$\\

    \bottomrule
    \end{tabular}
    \vspace{-0.3cm}
    \end{table}

\subsection{Security Model}
\textcolor{black}{
Consistent with the majority of the prior research on MP-PSI \cite{kolesnikov2017practical,inbar2018efficient,bay2021practical,miyaji2015scalable,wu2024ring}, our proposed protocols primarily focus on the \textit{semi-honest} adversarial model \cite{lindell2017simulate}. In this model, adversaries may try to learn as much information as possible from the protocol execution but will not deviate from the execution steps. These adversaries are also referred to as honest-but-curious. 
}

\textcolor{black}{
The view of a party consists of its private input, its random tape , and the list of all messages received during the protocol. The view of the adversary comprises the combined views of all corrupted parties, potentially allowing multiple colluding parties to aggregate their information and infer private data.
To prove the security of a protocol, it is common to construct a simulator $\mathsf{Sim}$ that, given the adversaries' inputs $X$ and outputs $Y$, generates simulated views  that are computationally indistinguishable from the adversaries' real views in the protocol execution \cite{kolesnikov2017practical,gao2024efficient,wu2024ring,nevo2021simple}.
}
\textcolor{black}{
\begin{definition}
     \textit{A protocol $\pi$ securely realizes functionality $ \mathcal{F} $ in the presence of semi-honest adversaries if there exists a simulator $ \mathsf{Sim} $ such that, for any subset of corrupt parties $\{P_i\in \mathbb{C}\}$, the views $\{\mathsf{Sim}(X,Y,\mathbb{C})\}$ generated by the simulator are computationally indistinguishable from the views $\{\mathsf{view}^\pi_\mathbb{\mathbb{C}}(X,Y)\}$ obtained by the adversaries in the real execution. Formally, this can be expressed as:
     $$
        \{\mathsf{Sim}(X,Y,\mathbb{C})\}\overset{c}{\equiv} \{\mathsf{view}^\pi_\mathbb{\mathbb{C}}(X,Y)\}.
     $$
     }
\end{definition}\label{Definition1}
}

\subsection{Shamir's Secret Sharing}

In the $(t,n)$-Shamir's secret sharing scheme, the dealer distributes the secret $S$ to $n$ participants, with each participant possessing a share of the secret. When $t$ or more participants collaborate, they   reconstruct the secret $S$ together. If fewer than $t$ participants are involved, they will not gain any information about the secret.

\subsubsection{Secret sharing and Reconstruction}
Shamir's secret sharing \cite{shamir1979share} is a $(t,n)$-threshold secret sharing scheme. During the secret distribution phase, the dealer chooses a prime number $p$ and randomly picks $t-1$ numbers $a_i,i \in [t-1]$ from the finite field  $\mathbb{F}_p$ to construct polynomial with secret $S$ :
\begin{equation*}
    f(x)=S+a_1x+\dots+a_{t-1}x^{t-1}.
\end{equation*}

For each participant $i$, the dealer evaluates the polynomial and distributes the secret share $(x_i, y_i)$, \textcolor{black}{where $y_i = f(x_i)$ and $x_i \neq 0$.
Without loss of generality, and for the sake of clarity, we define $x_i=i+1$ throughout the protocol.}
This scheme uses Lagrange interpolation theorem. Specifically, $t$ points on the polynomial can uniquely determine a polynomial with degree equal to or less than $t-1$.  Given any $t$ secret shares, secret $S$ can be reconstructed by  Lagrange interpolation:
\begin{equation*}
    S=f(0)=\sum_{i=0}^{t-1}y_i \prod_{j=0,j \neq i}^{t-1}(\frac{x_j}{x_j-x_i}).
\end{equation*}

\subsubsection{Secret shares update} \label{sub:Secret shares updating}
To update the secret shares held by participants, each participant  generates a random polynomial $f'(x)$ with a constant term of 0, expressed as
\begin{equation*}
    f'(x)=0+b_1x+\dots+b_{t-1}x^{t-1}.
\end{equation*}
Using this polynomial, the participant computes an update share $(x_i,y_i')$ for each participant $i$, where $y_i'=f'(x_i)$ and $x_i \neq 0$. Similarly, for consistency and ease of presentation, we define $x_i=i+1$ in our protocol.
Upon receiving this update share, each participant updates their secret share by setting $(x_i,y_i+y_i')$ as the new share. Essentially, this process performs a Shamir's secret sharing with a secret value of 0, ensuring that the correctness of reconstruction remains intact while updating the secret shares. This approach is similar to the proactive secret sharing scheme proposed by Herzberg et al. in \cite{herzberg1995proactive}.

\subsection{Hashing Schemes} \label{sub:hashing schemes}
\subsubsection{Simple Hashing}
In Simple hashing, the hash table consists of $m_b$ bins $B[0],\dots, B[m_b-1]$. \textcolor{black}{By uniformly selecting a hash function $h:\{0,1\}^* \rightarrow [0,m_b-1]$ at random,} element $e$ is mapped to bin $B[h(e)]$ in the hash table and inserted to hash table by appending it to this bin. Obviously, there are multiple elements in the same bin. 
In practice, multiple hash functions are often employed to reduce the probability of collision and improve load balancing.

\subsubsection{Cuckoo Hashing}
Cuckoo hashing scheme uses hash function $h_1,\dots,h_k:\{0,1\}* \rightarrow [m_b]$ to map $m$ elements to $m_b$ bins in hash table. Unlike Simple hashing, Cuckoo hashing is only allowed to store one element per bin. One variant of Cuckoo hashing is Cuckoo hashing with a stash. To insert an element $e$ into hash table do the following \cite{pinkas2018scalable}: (1) If one of bin $B[h_1(e)],\dots,B[h_k(e)]$ is empty, insert element $e$ into the empty bin. (2) Otherwise, the element $e$ is inserted into the bin $B[h_1(e)]$, evicting its existing content $o$. The evicted element $o$ is then relocated to a new bin $B[h_i(o)]$, using $h_i$ to determine the new bin location, where $h_i(o) \neq h_1(e) $ for $i \in [1,\dots,k]$.
The procedure is repeated until no more evictions are necessary, or until a threshold number of relocations been performed. In the latter case, the last element is placed in a stash. After Cuckoo hashing, element $ e$ can be found in the one of  following locations: bin $B[h_1(e)],\dots,B[h_k(e)]$ or stash.

Another variant of Cuckoo hashing, as proposed in \cite{kolesnikov2017practical}, eliminates the use of a stash by employing two hash tables. This design avoids the inefficiencies associated with a stash, where every item in one party’s stash need to be compared to every item of another party, increasing overhead. Specifically, the procedure starts by using three ``primary'' Cuckoo hash functions to determine the placement of an element. If these initial attempts are unsuccessful, the process resorts to two ``supplementary'' Cuckoo hash functions as a fallback mechanism.
By adjusting the parameters within the hashing scheme, the process can be ensured to succeed with a negligible failure probability, specifically less than $ 2^{-\lambda} $.

\subsection{Oblivious Programmable Pseudorandom Function} \label{sub:opprf}
Oblivious pseudorandom function (OPRF) \cite{kolesnikov2016efficient} is a two-party protocol through which the sender learns a pseudo-random function (PRF) key $k$, and the receiver learns $F(k,q_1),\dots,F(k,q_v)$, where $F$ is a pseudo-random function and  $(q_1,\dots,q_v)$ are the receiver's inputs. 
If the sender's \textcolor{black}{input} $ x $ matches the receiver's input $ q_i $, the sender can compute $ F(key, x) $, which equals $ F(\text{key}, q_i) $, under the key $ k $.

Oblivious programmable pseudorandom function (OPPRF) \cite{kolesnikov2017practical} is similar to OPRF, with the additional property that on a certain programmed set of inputs the function outputs programmed values. In the OPPRF, the sender inputs a set of points $\{(x_1,y_1),\dots,(x_u,y_u)\}$, and the receiver inputs $(q_1,\dots,q_v)$. By running protocol, the receiver ultimately obtains OPPRF output $(hint, F(k,hint,q_1), \dots, F(k,hint,q_v))$ and the sender gets $(k,hint)$.
Within the protocol, when the receiver's input $ q_i $ equals the sender's input $ x_j $ (i.e., $ q_i = x_j $), the receiver is able to obtain the value $ y_j $, which has been programmed by the sender. 
For the receiver, it is indistinguishable whether the obtained output is a random value or a value programmed by the sender; meanwhile, the sender remains oblivious to the receiver’s input.
The  functionality of OPPRF is presented in Fig. \ref{fig:OPPRF Functionality}.

Kolesnikov et al. proposed three instantiation methods for OPPRF, among which the table-based construction has favorable communication and computational cost \cite{kolesnikov2017practical}. However, this construction allows the receiver to evaluate the programmable PRF on only $v = 1$ point. Therefore, they extended this construction using hashing schemes to support both a large $u$ (the number of programmed points) and a large $v$ (the number of queries). Specifically, in the OPPRF protocol, the sender uses Simple hashing to map $x_i, i \in [1,u]$ into $m_b$ bins, while the receiver uses Cuckoo hashing to map $q_j, j \in[1,v]$ into $m_b$ bins. Now in each bin, the receiver has at most one item $q$. Therefore, they can run the table-based OPPRF protocol on these inputs. They refer to this protocol as the hashing-based OPPRF protocol.

\begin{figure}
    \centering
     \makebox[\columnwidth]{
        \fbox{
        \parbox{\dimexpr\columnwidth-2\fboxsep-2\fboxrule}
        {
            \textsc{Parameters}: 
                \begin{itemize}
                    \item A \textcolor{black}{programmable} pseudorandom function $F$.
                    \item The upper bound $u$ on the number of points to be programmed.
                    \item The bound $v$ on the number of queries.
                \end{itemize}
                
            \textsc{Input}: 
                \begin{itemize}
                    \item The sender inputs  points $\{(x_1,y_1),\dots,(x_u,y_u)\}$.
                    \item The receiver inputs $(q_1,\dots,q_v)$.
                \end{itemize}

            \textsc{Output}: 
                \begin{itemize}
                     \item For each $q_i$, if $ q_i = x_j $, the receiver obtains $ y_j $; otherwise, the receiver receives a random value.
                \end{itemize}
            
        }
    }
    }
    \caption{The OPPRF functionality.}
    \label{fig:OPPRF Functionality}   
    \vspace{-0.5cm}
\end{figure}

\subsection{Oblivious Linear Evaluation}
Oblivious linear evaluation (OLE) is a two-party protocol and serves as a fundamental building block in multi-party secure computation protocols \cite{rindal2021vole}. 
In the OLE protocol, the sender inputs  $a$  and  $b$, where  $a$  and  $b$  are elements of a finite field  $\mathbb{F}$ . The receiver inputs $x \in \mathbb{F}$ and ultimately receives $y \in \mathbb{F}$ such that
$
y = ax + b
$.
Throughout the protocol, the sender remains oblivious to the receiver's input  $x$, and the receiver does not learn any information about the sender's inputs $a$  and  $b$.

Vector OLE (VOLE) is the vectorized variant of the OLE protocol, allowing the receiver to learn a linear combination of two vectors held by the sender. Specifically, the sender inputs vectors $\boldsymbol \alpha, \boldsymbol \beta \in \mathbb{F}^n$, while the receiver inputs $x \in \mathbb{F}$ and obtains $\boldsymbol y \in \mathbb{F}^n$ where
$
\boldsymbol y = \boldsymbol \alpha x + \boldsymbol \beta
$.

Batch OLE (BOLE) is similar to VOLE but extends it by allowing the receiver's input to also be a vector. Specifically, the receiver inputs a vector $ \boldsymbol x \in \mathbb{F}^n $ and obtains  $\boldsymbol y \in \mathbb{F}^n$ where
$
\boldsymbol y = \boldsymbol \alpha \boldsymbol x + \boldsymbol \beta
$.

\section{Traceable Over-Threshold Multi-Party Private Set Intersection}
\subsection{Functionality Definition}
Before introducing our proposed protocols, we first define the functionality of protocols. The protocols require $ n \geq 3$ parties, denoted as $P_0,\dots,P_{n-1}$,  each holding a private set of size $m$, denoted as $S_0,\dots,S_{n-1}$, along with a threshold value $t$. 
The ultimate goal of the protocols is to enable party $ P_0 $ to obtain the following information: 
\begin{itemize}[leftmargin=*]
    \item \textbf{Intersection elements:} Party $P_0$ identifies each element $e_i$ from its own set $S_0$ that satisfy the threshold condition $c_i \geq t$, where $t$ is the predefined threshold value and $c_i$ denotes the number of parties holding the element $e_i$. 
    \item \textbf{Identity of element holders:} The protocol reveals the specific participants $\{P_j\}$ holding each intersection element.
    \item \textbf{Counting each intersection element:} Since the identities of the holders are disclosed, $P_0$ naturally infers the number of each intersection element. 
\end{itemize}

We refer to such protocols as the \textbf{Traceable Over-Threshold Multi-Party Private Set Intersection}.
The ideal functionality $\mathcal{F}_{\text{T-OT-MP-PSI}}^{n,m,t}
$ is formally described in Fig. \ref{fig:Traceable OT-MP-PSI Functionality}.
\begin{figure}
    \centering
     \makebox[\columnwidth]{
        \fbox{
        \parbox{\dimexpr\columnwidth-2\fboxsep-2\fboxrule}
        {
            \textsc{Parameters}: 
                \begin{itemize}
                    \item $n \geq 3$ parties $P_0,\dots,P_{n-1}$, with their respective private sets ${S}_0,\dots,{S}_{n-1}$ of size $m$.
                \end{itemize}
                
            \textsc{Input}: 
                \begin{itemize}
                    \item Each party $P_i$ has a private set $S_i$ as input.
                    \item A threshold value $t$, where $1 < t\leq n$.
                \end{itemize}

            \textsc{Output}: 
                \begin{itemize}
                     \item $P_0$ outputs the intersection set $I=\{(e_i,c_i,\{P_j\})|,e_i \in S_0,c_i \geq t\}$, where $e_i$ is the intersection element, $c_i$ is the number of parties holding element $e_i$, and $\{P_j\}$ is the set of these parties.
                    \item $P_1,\dots,P_{n-1}$ outputs $\bot$.
                \end{itemize}
            
        }
    }
    }
    \caption{Traceable OT-MP-PSI functionality $\mathcal{F}_{\text{T-OT-MP-PSI}}^{n,m,t}$.}
    \label{fig:Traceable OT-MP-PSI Functionality}
    \vspace{-0.5cm}
\end{figure}

\subsection{Protocol Overview}

Motivated by the limitations of fully anonymous MP-PSI with threshold in regulatory scenarios, we aim to develop Traceable OT-MP-PSI. Considering the characteristics of the protocol, the first challenge arises:
\begin{framed}
\noindent \textit{\textbf{Challenge 1:} How can we design an MP-PSI with threshold that supports traceability? }
\end{framed}

Motivated by Mahdavi et al. (ACSAC'20) \cite{mahdavi2020practical}, we observe that the traceability in MP-PSI with threshold can be achieved by combining Shamir's secret sharing with the additively homomorphic Paillier cryptosystem. However, the traceable MP-PSI construction in \cite{mahdavi2020practical} still exhibits limitations in both security and performance. Specifically, even assuming no collusion across certain special parties, the protocol is secure against collusion among up to $t-2$ participants, where $t$ is the threshold. For performance, take 10 participants with a threshold of 7 and a set size of $2^5$ as an example. In this case, the protocol will consume more than 9 hours.
Therefore, we take a step back and are inspired by the work of Kolesnikov et al. (CCS'17) \cite{kolesnikov2017practical}. An approach to extending traditional MP-PSI to MP-PSI with threshold and traceability is to combine Shamir's secret sharing with Oblivious Programmable Pseudorandom Function (OPPRF). 

Specifically, each element in $P_0$'s set is processed into distinct shares and assigned to specific participants, enabling \textit{traceability}. 
To obtain the intersection, during the distribution and collection of shares, shares of each potential element should be received by target participants holding the same element, which is achieved through OPPRF. 
If the final collected shares for an element successfully reconstruct the original secret, it indicates that at least $t$ participants hold the element. This realizes the \textit{threshold functionality}, confirms its inclusion in the intersection and reveals corresponding holders.

Nevertheless, such a scarecrow Traceable OT-MP-PSI may expose the privacy of participants' sets. Since party $P_0$, which initializes the protocol, holds all the shares of the secret, it is able to infer whether other participants hold elements not in the intersection. This is achieved by simply comparing the initial shares it sent with the final shares it received, even if the process involves OPPRF. Such inference violates the general requirement for privacy protection in MP-PSI. So another challenge occurs:
\begin{framed}
\noindent \textit{\textbf{Challenge 2:} How can we preserve the privacy of non-intersecting elements while ensuring the functionality of the protocol?}
\end{framed}
Updating the secret shares held by each participant is an option. Specifically, we employ zero-value Shamir's secret sharing to update the shares held by each party, effectively refreshing the original secret shares to protect the privacy of participants and ensuring the correctness of secret reconstruction, as outlined in Section \ref{sub:Secret shares updating}. By leveraging Shamir's secret sharing to update secret shares, we achieve updating while avoiding additional costly operations, thereby proposing the Efficient Traceable OT-MP-PSI (ET-OT-MP-PSI).

In such an extension from MP-PSI \cite{kolesnikov2017practical} to MP-PSI with threshold that supports traceability, each party can directly obtain the correct values for updating their shares. As a result, any collusion of $ t-1 $ corrupted parties (including $ P_0 $) compromises the privacy of the honest parties. Consequently, the ET-OT-MP-PSI can only withstand collusion by up to $ t-2 $ parties. The overall process of the protocol is illustrated in Fig.~\ref{fig:protocol_process}.
\begin{figure}
    \centering
    \includegraphics[width=0.85\linewidth]{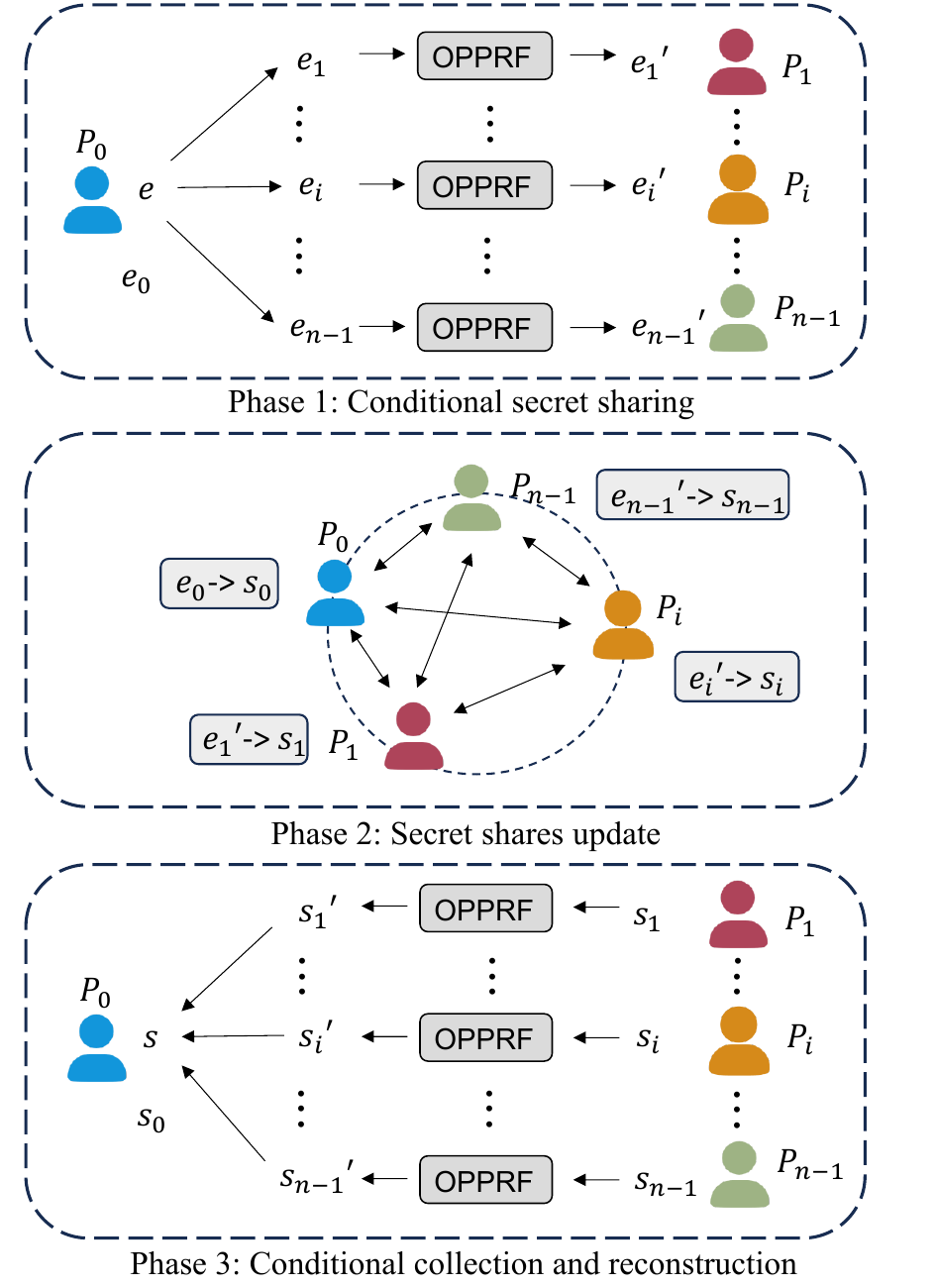}
    \caption{The overall process of ET-OT-MP-PSI.}
    \label{fig:protocol_process}
    \vspace{-0.5cm}
\end{figure}

To meet higher security requirements, we now take a step forward in the level of security. A Traceable OT-MP-PSI protocol should be robust against collusion among an arbitrary number of participants. We now face the final challenge: 

\begin{framed}
\noindent \textit{\textbf{Challenge 3:} How can we design a Traceable OT-MP-PSI that is secure against arbitrary collusion among semi-honest participants?}
\end{framed}

To enhance security, we introduce the OLE protocol to enable a three-party interaction during the shares update phase. 
{In this design, party $ P_i $ receives the correct values for updating shares from party $ P_j $ only if $ P_i $ holds the same element as party $ P_0 $.  
Hence, $t-1$ colluding parties can infer whether the honest party $P_i$ holds a specific element $e$ only if all of the colluding parties also hold the same element.
When $n-1$ parties collude, if fewer than $t-1$ of them possess the element $e$, the colluding parties are unable to compromise the protocol's security. Conversely, if at least $t-1$ of the colluding parties hold $e$, the protocol output reveals whether $e$ is part of the intersection and discloses its associated holders. Since this information is explicitly included in the output of the protocol, such inference does not result in any additional privacy leakage. Therefore, the Security-enhanced Traceable OT-MP-PSI is secure against arbitrary collusion among semi-honest participants.}

In conclusion, both of our proposed Traceable OT-MP-PSI protocols consist of the following three main phases:
\begin{itemize}[leftmargin=*]
    \item \textbf{Conditional secret sharing.} Elements in the set are transformed into secret shares and securely transmitted among all participants using OPPRF.
    \item \textbf{Secret shares update.} Each participant individually updates their own shares using new shares of zero-value secret before collection.
    \item \textbf{Conditional collection and reconstruction.} Use OPPRF again to collect shares and reconstruct elements to obtain the intersection and corresponding holders.
\end{itemize}

In the following sections, we present the Efficient Traceable OT-MP-PSI (ET-OT-MP-PSI) and the Security-enhanced Traceable OT-MP-PSI (ST-OT-MP-PSI) in detail.

\subsection{Details of the Efficient Traceable OT-MP-PSI } \label{sub:Details of the Efficient Traceable OT-MP-PSI}

We now present our first protocol, ET-OT-MP-PSI. A formal description of the protocol is in Fig.~\ref{fig:ET-OT-MP-PSI Protocol}. 
\textcolor{black}{Prior to the protocol execution, each participant maps their set using  both Cuckoo hashing and Simple hashing, as described in Section \ref{sub:hashing schemes}. When applying Cuckoo hashing to the set elements, if the $b^{\text{th}}$ bin is empty, it is padded with a dummy element. Similarly, when using Simple hashing, each bin is padded with dummy elements so that its total size reaches the maximum bin size $\beta$. we adopt the method by Kolesnikov et al.~\cite{kolesnikov2017practical} for calculating $\beta$.
The padding is performed to hide the number of elements that were mapped to a specific bin, which would leak information about the input.}

\begin{figure}[]
    \centering
    \makebox[\columnwidth]{    
    \fbox{
        \parbox{\dimexpr\columnwidth-2\fboxsep-2\fboxrule}
        {
            \textsc{Parameters}: 
            \begin{itemize}
                \item $n$ parties $P_0,\dots,P_{n-1}$.
                \item A prime $p$, used for $(t,n)$-Shamir’s secret sharing.
                \item Hash functions for hashing scheme.
            \end{itemize}

            \textsc{Input}:
            \begin{itemize}
                \item Each party $P_i$ uses its private set $S_i$ as input.
                \item The threshold $t$. 
            \end{itemize}

            \textsc{Protocol}:
            \\  $\underline{\mathit{Conditional \ Secret \ Sharing}} $
            \begin{itemize}
                \item [(1)] Each party maps its set $S_i$ into bins using Cuckoo and Simple hashing schemes, obtaining $B_C[\cdot]$ and $B_S[\cdot]$. \textcolor{black}{Each empty bin in $B_C[\cdot]$ is padded with a dummy element, while each bin in $B_S[\cdot]$ is padded with dummy elements to the maximum bin size $\beta$.}
                \item [(2)] For each element $e_k^0  \in S_0, k \in [m]$, $ P_0 $ treats it as a secret and performs $(t, n)$-Shamir's secret sharing, generating $ n $ shares $s_k^{0,0},\dots, s_k^{0,n-1}$.
                \item [(3)] For the $b^{\text{th}}$ bin, $b \in [m_b]$, $P_0$ invokes an OPPRF protocol with every other party $P_i$, where $i \in [1,n-1]$.
                \begin{itemize}
                    \item $P_0$ is the sender with input $\{(e_k^0,s_k^{0,i})|e_k^0 \in B_S[b]\}$.
                    \item $P_i$ acts as receiver with input $\{e_k^i|e_k^i \in B_C[b]\}$. As a result, for every $e_k^i \in S_i$, $P_i$ obtains a corresponding OPPRF output, denoted as $\hat{s}_k^{0,i}$.            
                \end{itemize}
            \end{itemize}
             $\underline{\mathit{Secret \ Shares \ Update}}$
            \begin{itemize}
                \item [(4)] For  the $b^{\text{th}}$ bin, each $P_i, i \in [1,n-1]$ performs secret shares update mentioned in Section \ref{sub:Secret shares updating} to get shares $(j+1,f_{i,b}(j+1))$ and directly sends to $P_j,j \in [n]$ .
                \item [(5)] Upon receiving the values sent by the other participants, for  the $b^{\text{th}}$ bin, party $P_j$  sums the $ n-1 $ values to obtain $ \delta_b = f_{1,b}(j+1) +\dots+f_{n-1,b}(j+1)$. For $e_k^0 \in B_C[b]$, $P_0$ updates its shares: $y_k^0 = s_k^{0,0}+\delta_b$.
            \end{itemize}
             $\underline{\mathit{Conditional\ Collection\ and\ Reconstruction}}$
            \begin{itemize}
                \item [(6)] For  the $b^{\text{th}}$ bin, each pair of $P_i$ and $P_0$ invokes an OPPRF.
                \begin{itemize}
                    \item  $P_i$ is the sender with input $\{(e_k^i,\mu_k^{0,i}) | e_k^i \in B_S[b]\}$, where $\mu_k^{0,i} = \hat{s}_k^{0,i} + \delta_b$. 
                    \item  $P_0$ is the receiver with input $\{e_k^0|e_k^0 \in B_C[b]\}$ and obtains $y_k^i$ which represents the corresponding OPPRF output.
                \end{itemize}
                \item [(7)]\textcolor{black}{For each element $e_k^0 \in S_0$, $P_0$ applies Lagrange interpolation over all subsets of $t$ shares among the $n$ values $y_k^i$, always including its own share, to compute $Recon(y_k^i) = f_k(\cdot)$. If $f_k(0) = e_k^0$, then $e_k^0$ is identified as an intersection element. Subsequently, $P_0$ determines all the holders of $e_k^0$ by checking whether $f_k(i+1) = y_k^i$ holds for each $i \in [1, n-1]$. Finally, $P_0$ obtains the intersection set $I$.}


            \end{itemize}
            
        }
    }
    }
    \caption{ET-OT-MP-PSI protocol.}
    \label{fig:ET-OT-MP-PSI Protocol}
    \vspace{-0.5cm}
\end{figure}

In the conditional secret sharing phase, for each element $e_k^0,k \in [m]$ in $S_0$, $P_0$ performs $(t, n)$-Shamir's secret sharing, generating $n$  shares $s_k^{0,0},\dots, s_k^{0,n-1}$.
Subsequently, $P_0$ executes the OPPRF protocol with each of the other parties $P_i, i\in[1,n-1]$, following Section \ref{sub:opprf}. $P_0$ programs the OPPRF using $\{(e_k^0,s_k^{0,i})|k \in [m]\}$, and $P_i$ acts as the receiver with input $S_i$. After the OPPRF is executed, each party $P_i$ obtains a corresponding OPPRF output for each $e_k^i$, denoted as $\hat{s}_k^{0,i}$. 
According to the properties of the OPPRF protocol, if $e_k^0 = e_k^i$, then $s_k^{0,i} = \hat{s}_k^{0,i}$. Moreover, $P_i$ does not know whether the received values are the real shares or random values.

In the shares update phase, $P_i,i\in[1,n-1]$ performs secret shares update as described in Section \ref{sub:Secret shares updating}. 
Specifically, for the the $b^{\text{th}}$ bin, $ P_i $ generates a random  polynomial with a constant term of $0$, a degree of $ t-1 $:
\begin{equation*}
    f_{i,b}(x)=0+a_1x+\dots+a_{t-1}x^{t-1}.
\end{equation*}
Subsequently, $P_i$ generates the corresponding secret shares $(j+1,f_{i,b}(j+1))$ for each party and directly sends the secret shares to the other parties  $P_j,j \in [n]$. For  the $b^{\text{th}}$ bin, $P_j$ sums $n-1$ obtained shares to obtain values $\delta_b$, for updating the original secret shares. Then, for $e_k^0 \in B_C[b]$, $P_0$ updates its shares: $y_k^0 = s_k^{0,0}+\delta_b$.

\textcolor{black}{
Finally, for each element $e_k^0$, $P_0$ enumerates all possible subsets of $t$ shares from the $n$ values and applies Lagrange interpolation to each subset to compute $Recon(y_k^{i}) = f_k(\cdot)$. This is equivalent to selecting all subsets of $t - 1$ shares from the remaining $n - 1$ values, since $P_0$'s own share is always correct and included in each reconstruction attempt. If any reconstruction satisfies $f_k(0) = e_k^0$, it indicates that $e_k^0$ is an intersection element. Furthermore, $P_0$ identifies all holders of this element by checking whether $f_k(i+1) = y_k^i$ holds for each $i \in [1, n - 1]$.
}

\subsection{Security-Enhanced Traceable OT-MP-PSI} \label{sub:Security-Enhanced Traceable OT-MP-PSI}

\subsubsection{Design Rationale}
In ET-OT-MP-PSI, collusion among $ t-1 $ participants (including $ P_0 $) reveals whether an honest party $ P_i $ holds a specific element $ e \in S_0 $ without secret reconstruction.

Specifically, Shamir's secret sharing relies on the Lagrange interpolation theorem, which states that a polynomial of degree at most $t-1 $ is uniquely determined by $t$  points. 
Now, suppose $ P_0, \dots, P_{t-2} $ collude, and $ P_i $ is the honest party. 
In the first phase, $ P_0 $ shares the element $ e $ as a secret using Shamir's secret sharing, generating a share $ x_i $ for $ P_i $. Then, $ P_i $ obtains the corresponding OPPRF output $ y_i $. The updated share, denoted as $y_i'$ in the second phase, is calculated as 
\begin{equation}
     y_i' = y_i + f_1(i+1) +  \dots + f_{n-1}(i+1).
\end{equation}
In Equation~$(1)$, $ f_1(i+1), \dots, f_{t-2}(i+1) $ are the values generated by the polynomials of parties $ P_1, \dots, P_{t-2} $. The polynomials $ f_{t-1}(\cdot), \dots, f_{n-1}(\cdot) $ can be reconstructed by the $ t-1 $ colluding parties $P_0,\dots,P_{t-2}$ according to the Lagrange interpolation theorem, enabling them to obtain the values $ f_{t-1}(i+1), \dots, f_{n-1}(i+1) $. 
Thus, in the final stage, after the OPPRF execution, $ P_0 $ determines whether the honest party $ P_i $ holds the element $ e $ by verifying if the OPPRF output satisfies:
\begin{equation}
    \text{OPPRF output} \stackrel{?}{=} x_i + f_1(i+1) + \dots + f_{n-1}(i+1).
\end{equation}
If Equation~$ (2) $ holds, it signifies party $ P_i $ possesses the element $ e $.

The root cause of such inference is that the values for updating the shares are directly sent to the parties, and reconstruction only relies on zero-value Shamir's secret sharing. Therefore, the security of the protocol can be compromised if $t-1$ corrupted parties collude, i.e., obtain enough shares to reconstruct the polynomial.
To prevent such inference, we further impose requirements on the elements held by participants, such that reconstruction is only feasible when a sufficient number of parties possess the same element. That is, successful reconstruction requires not only collecting enough shares but also ensuring that enough parties possess the identical element.
To implement the above idea, OLE is introduced to enable a three-party interaction during the shares update process. Specifically, $P_0$ and $P_i$ independently perform the OLE protocol with party $P_j$ such that the OLE outputs received by $P_0$ and $P_i$ collectively produce the correct value for updating $P_i$'s secret share, only if $P_0$ and $P_i$ hold the same element. Consequently, $P_i$'s share is updated correctly. 
As shown in Fig. \ref{fig:OLE_interaction}, if $ P_0 $ and $ P_i $ hold the same element, i.e., $ x = y $, the sum of their outputs equals $ f_j(i+1) $, which is the correct value for updating $ P_i $'s corresponding share.
By incorporating OLE, we develop the Security-enhanced Traceable OT-MP-PSI (ST-OT-MP-PSI).
In this protocol, when $n-1$ parties collude:
\begin{itemize}[leftmargin=*]
    \item If fewer than $t-1$ of them hold the element $e$,  the colluding parties cannot compromise the protocol's security through the inference described earlier.
    \item If at least $t-1$ of the colluding parties hold $e$, they can directly learn from the protocol output whether $e$ is in the intersection, and, if so, identify its holders. Since this information is explicitly included in the output of the protocol, such inference does not lead to any additional privacy leakage.
\end{itemize}
As a result, ST-OT-MP-PSI is secure against arbitrary collusion in the semi-honest model. The formal security proof is provided in Section~\ref{security analysis}.

\begin{figure}
    \centering
    \includegraphics[width=0.85\linewidth]{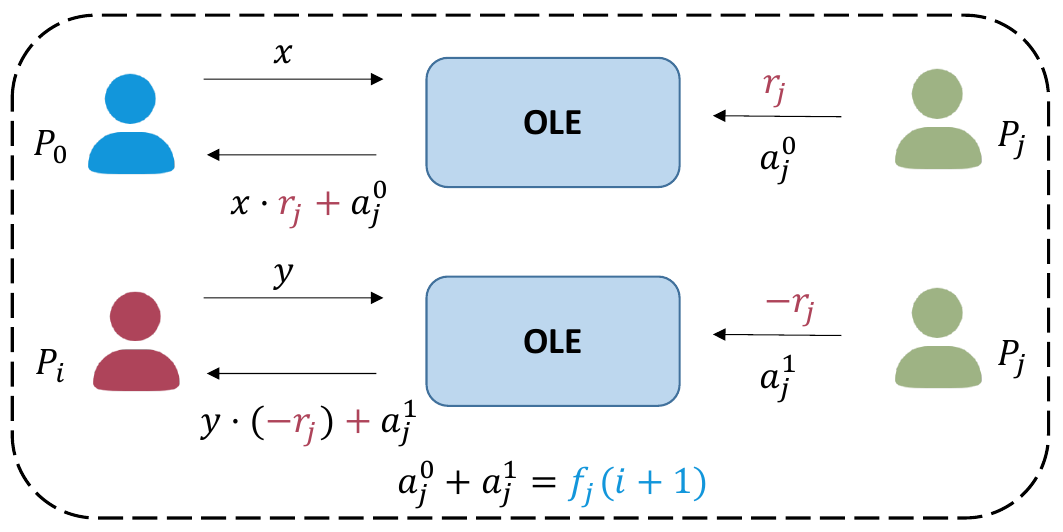}
    \caption{The core idea of the shares update phase in ST-OT-MP-PSI.}
    \label{fig:OLE_interaction}
    \vspace{-0.5cm}
\end{figure}

\subsubsection{Details of ST-OT-MP-PSI}

The ST-OT-MP-PSI follows the same basic steps as the ET-OT-MP-PSI, consisting of three phases, but there are some differences in certain aspects. The formal description of this protocol is given in Fig. \ref{fig:ST-OT-MP-PSI Protocol}.

At the outset, each party maps their set using Cuckoo hashing and Simple hashing, obtaining $B_C[\cdot]$ and $B_S[\cdot]$. 
\textcolor{black}{Similarly, to prevent information leakage, each party pads bins according to the corresponding hashing scheme. }

In the secret sharing phase, $ P_0 $ generates a random value $e_k^0{'}$ as the secret, which is uniquely mapped to an element $e_k^0$ in its set $S_0$, and performs $(t, n)$-Shamir's secret sharing to obtain $n$ shares $s_k^{0,0},\dots, s_k^{0,n-1}$. Subsequently, these shares are conditionally distributed to other parties through the OPPRF. The reason for not directly using $ P_0 $'s elements as secrets for secret sharing is to prevent $ t $ colluding parties from reconstructing with shares obtained through the OPPRF, which could potentially reveal information about $ P_0 $'s elements.

In the shares update process, for the $b^{\text{th}}$  bin, $ P_j, j \in [1,n-1]$ generates a polynomial $ f_{j,b}(\cdot) $ and directly sends the value $f_{j,b}(1)$ to $P_0$. Then, $P_0$ uses these values to directly update its own share. Afterward, both $ P_i $ and $ P_0 $, acting as receivers, execute the OLE protocol with each of the other parties $ P_j ,i,j\in[1,n-1]$. 
For each bin $ B_C[b] $, $ P_0 $ and $ P_j $ execute the OLE protocol $ \beta $ times, where $ \beta $ is the maximum bin size of $ B_S $. In each execution, $ P_j $, acting as the sender, inputs different $ r_j^v $ and $ a_j^{0,v},v \in [\beta]$, while $ P_0 $, acting as the receiver, inputs the element $ e_k^0$, which is located in its bin $B_C[b]$. As a result, $ P_0 $ obtains $ e_k^0 \cdot r_j^v + a_j^{0,v} $.
Similarly, for each bin $B_S[b]$, $P_i$ and $P_j$ invoke $\beta$ instances of the OLE protocol. In each execution, $P_j$, acting as the sender, inputs $-r_j^v$ and $a_j^{1,v}$, while $P_i$, acting as the receiver, inputs the element $e_k^i$ from its bin $B_S[b]$ and receives $e_k^i \cdot -r_j^v + a_j^{1,v}$. Here, $r_j^v$ is a random value, and  $a_j^{0,v} +a_j^{1,v} = f_{j,b}(i+1)$. After completing the OLE protocol with all parties $ P_j $, $ P_0 $ obtains:
\begin{equation*}
    z_0^v = (r_1^v + \cdots + r_{n-1}^v) \cdot e_k^0 + (a_1^{0,v} + \cdots + a_{n-1}^{0,v})
\end{equation*}
and $ P_i $ obtains:
\begin{equation*}
z_1^v = -(r_1^v + \cdots + r_{n-1}^v) \cdot e_k^i + (a_1^{1,v} + \cdots + a_{n-1}^{1,v}).
\end{equation*}
At this point, if $ P_0 $ and $ P_i $ hold the same element, then after completing the third phase, which involves the OPPRF, $ P_0 $ obtains:
\begin{equation*}
y_k^i = s_k^{0,i} + z_1^v = s_k^{0,i}  - (r_1^v + \cdots + r_{n-1}^v) \cdot e_k^i + (a_1^{1,v} + \cdots + a_{n-1}^{1,v}).
\end{equation*}
For each element $e_k^0 \in S_0$, there are $ \beta $ OLE results  related to each $P_i$ for $P_0$. Therefore, it is necessary for $P_0$ to use the OPPRF with each party $P_i$ to conditionally receive the OLE index $v$ corresponding to $ e_k^0$, so that it can retrieve the corresponding OLE output, denoted as $ z_0^v,v \in [\beta]$, and obtain the updated share from $P_i$.
After completing the above steps, we observe that if $ P_0 $ and $ P_i $ hold the same element, i.e., $ e_k^0 = e_k^i $, $ P_0 $ obtains the correctly updated share from party $P_i$. Otherwise, $P_0$ obtains a random value. The details are as follows:
\begin{align*}
y_k^i &= y_k^i + z_0^v \\
&= s_k^{0,i}  + (r_1^v + \cdots + r_{n-1}^v) \cdot e_k^0 + (a_1^{0,v} + \cdots + a_{n-1}^{0,v}) \\
&\quad - (r_1^v + \cdots + r_{n-1}^v) \cdot e_k^i + (a_1^{1,v} + \cdots + a_{n-1}^{1,v}) \\
&= s_k^{0,i}  + ((r_1^v + \cdots + r_{n-1}^v)\cdot(e_k^0 - e_k^i) \\
&\quad +((a_1^{0,v} + a_1^{1,v})+ \cdots +(a_{n-1}^{0,v} + a_{n-1}^{1,v})) \\
&= s_k^{0,i}  + (f_{1,b}(i+1) + \cdots + f_{n-1,b}(i+1))
\end{align*}

\textcolor{black}{
Ultimately, for each element $e_k^0 \in S_0$, $P_0$ performs secret reconstruction using the $n$ updated shares $y_k^i$. In each attempt, $P_0$ selects a subset of $t - 1$ shares from the other $n - 1$ parties, combines them with its own share, and applies Lagrange interpolation to compute $Recon(y_k^{i}) = f_k(\cdot)$. If any reconstruction satisfies $f_k(0) = e_k^0$, then $e_k^0$ is identified as an intersection element. Furthermore, $P_0$ identifies all holders of this element by checking whether $f_k(i+1) = y_k^i$ holds for each $i \in [1, n - 1]$.
}


\begin{figure*}
    \centering
    \makebox[\textwidth]{
    \fbox{
        \parbox{\dimexpr\textwidth-2\fboxsep-2\fboxrule}
        {
            \textsc{Parameters}: 
            \begin{itemize}
                \item The same as the protocol described in Fig. \ref{fig:ET-OT-MP-PSI Protocol}.
            \end{itemize}

            \textsc{Input}:
            \begin{itemize}
                \item Each party $P_i$ uses its private set $S_i$ as input.
                \item The threshold $t$. 
            \end{itemize}

            \textsc{Protocol}:
             \\  $\underline{\mathit{Conditional \ Secret \ Sharing}} $
            \begin{itemize}
                \item [(1)] Each party maps their set $S_i$ into bins using Cuckoo  and Simple hashing as described in Section \ref{sub:hashing schemes}, obtaining $B_S[\cdot]$ and $B_C[\cdot]$. \textcolor{black}{Each empty bin in $B_C[\cdot]$ is padded with a dummy element, while each bin in $B_S[\cdot]$ is padded with dummy elements to the maximum bin size $\beta$.}
                \item [(2)] For each element $e_k^0 \in S_0$, $P_0$ generates a corresponding random value $ e_k^0{'} $.
                Subsequently, $ P_0 $ uses $ e_k^0{'} $ as the secret and performs a $(t, n)$-Shamir's secret sharing, generating $ n $ shares $ s_k^{0,0}, \dots, s_k^{0,n-1} $.
                \item [(3)] For the $b^{\text{th}}$ bin, $b \in [m_b]$, $P_0$ and each party $P_i, i \in [1,n-1]$ execute an OPPRF.
                \begin{itemize}
                    \item $P_0$ is the sender with input $\{(e_k^0,s_k^{0,i})|e_k^0 \in B_S[b]\}$.
                    \item $P_i$ is the receiver with input $\{e_k^i|e_k^i \in B_C[b]\}$ and obtains a corresponding output $\hat{s}_k^{0,i}$ for every $e_k^i \in S_i$.
                \end{itemize}
            \end{itemize}
             $\underline{\mathit{Secret \ Shares \ Update}}$
            \begin{itemize}
                \item [(4)] For  the $b^{\text{th}}$ bin, $P_j$ performs secret shares update mentioned in Section \ref{sub:Secret shares updating}, generating $n$ shares $(i+1,f_{j,b}(i+1)),i,j \in [n]$ and directly sends the value $f_{j,b}(1)$ to $P_0$. For $e_k^0 \in B_C[b]$, $P_0$ updates its shares: $y_k^0 = s_k^{0,0}+f_{1,b}(1)+\cdots+f_{n-1,b}(1)$.
                \item [(5)] For each bin $B_C[b]$, $P_0$ and $P_j$ execute $\beta$ instances of the OLE protocol, where $\beta$ is the maximum bin size of $B_S$.
                \begin{itemize}
                    \item  $P_j$ acts as the sender with inputs $r_j^v$ and $a_j^{0,v}$, where $v \in [\beta]$ and $v$ is the index of OLE execution. Both inputs are random values. 
                    \item  $P_0$ acts as the receiver with input $e_k^0 \in B_C[b]$, which may be a dummy element. The OLE output for $P_0$ is $e_k^0 \cdot r_j^v + a_j^{0,v}$.
                \end{itemize}
                After completing the OLE protocol with all $P_j$, $P_0$ obtains $z_0^{v} = (r_1^v + \cdots + r_{n-1}^v) \cdot e_k^0 + (a_1^{0,v} + \cdots + a_{n-1}^{0,v})$.
                \item [(6)] For each bin $B_S[b]$, $P_i$ and $P_j$ execute $\beta$ instances of the OLE protocol.
                \begin{itemize}
                    \item $P_j$ acts as the sender with inputs $-r_j^v$ and $a_j^{1,v}$, where $a_j^{0,v} + a_j^{1,v} = f_{j,b}(i+1)$ and $v$ is the index of OLE execution.
                    \item $P_i$ acts as the receiver with input $e_k^i \in B_S[b]$, which may be a dummy element. The OLE output for $P_i$ is $e_k^i \cdot -r_j^v + a_j^{1,v}$.
                \end{itemize}
                After completing the OLE protocol with all $P_j$, $P_i$ obtains $z_1^v = -(r_1^v + \cdots + r_{n-1}^v) \cdot e_k^i + (a_1^{1,v} + \cdots + a_{n-1}^{1,v})$.
            \end{itemize}
             $\underline{\mathit{Conditional\ Collection\ and\ Reconstruction}}$
            \begin{itemize}
                \item [(7)] For  the $b^{\text{th}}$ bin, $P_0$ and $P_i$ invoke an OPPRF.
                \begin{itemize}
                    \item $P_i$ is the sender with input $\{(e_k^i,\mu_k^{0,i})|e_k^i \in B_S[b]\}$, where $\mu_k^{0,i} = \hat{s}_k^{0,i} + z_1^v $.
                    \item $P_0$ is the receiver with input $\{e_k^0|e_k^0 \in B_C[b]\}$ and obtains a corresponding output $y_k^i$ for each $e_k^0 \in S_0$. 
                \end{itemize}
                \item [(8)] For  the $b^{\text{th}}$ bin, $P_0$ and $P_i$ invoke an OPPRF.
                \begin{itemize}
                    \item $P_i$ is the sender with input $\{(e_k^i,v)|e_k^i \in B_S[b]\}$, where $v$ is the index of OLE execution about element $e_k^i$ in $B_S[b]$.
                    \item $P_0$ is the receiver with input $\{e_k^0|e_k^0 \in B_C[b]\}$ and obtains a corresponding output $v'$ for each $e_k^0 \in S_0$. 
                \end{itemize}
                \item [(9)] \textcolor{black}{For each $e_k^0$, $P_0$ identifies its position $b$ in $B_C$, retrieves $z_0^{v'}$ from the $b^{\text{th}}$ bin and updates the share as $y_k^i = y_k^i + z_0^{v'}$. 
                Then, for each element $e_k^0 \in S_0$, $P_0$ applies Lagrange interpolation to compute $Recon(y_k^i)=f_k(\cdot)$ over all subsets of $t$ shares among the $n$ values, always including its own share. If any reconstruction yields $f_k(0) = e_k^0{'}$, the element $e_k^0$ is confirmed to be in the intersection. Subsequently, $P_0$ determines all parties holding this element by verifying whether $f_k(i+1) = y_k^i$ for each $i$. Finally, $P_0$ obtains the complete intersection set $I$.
                }

                
            \end{itemize}         
        }
    }
    }
    \caption{ST-OT-MP-PSI protocol.}
    \label{fig:ST-OT-MP-PSI Protocol}
    \vspace{-0.5cm}
\end{figure*}

\section{Theoretical Analysis}
\textcolor{black}{
\subsection{Correctness and Security Analysis}\label{security analysis}
\begin{theorem}
\textit{The ET-OT-MP-PSI realizes the functionality $\mathcal{F}_{\text{T-OT-MP-PSI}}^{n,m,t}$ and is secure against collusion among up to $t - 2$ parties in the semi-honest model, given the statistical security parameter $\lambda$ and the computational security parameter $\kappa$.}\label{theorem 1}
\end{theorem}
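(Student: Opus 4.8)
The plan is to prove the two halves separately --- that the protocol realizes $\mathcal{F}_{\text{T-OT-MP-PSI}}^{n,m,t}$ and that it admits a simulator against any coalition $\mathbb{C}$ with $|\mathbb{C}|\le t-2$ --- working throughout in the $\mathcal{F}_{\text{OPPRF}}$-hybrid model (Fig.~\ref{fig:OPPRF Functionality}) and folding the OPPRF sub-protocols back in only at the very end via their own security and a standard composition argument.

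For \textbf{correctness}, I would fix an element $e=e_k^0\in S_0$ and follow it through the three phases. The Cuckoo/Simple hashing of \cite{kolesnikov2017practical} guarantees, except with probability at most $2^{-\lambda}$, that whenever another party $P_i$ also holds $e$, the bin into which $P_i$ Cuckoo-hashes $e$ is one into which $P_0$ Simple-hashes $e$, so $P_i$'s step-(3) query matches and $P_i$ obtains the true share $s_k^{0,i}$, and symmetrically in step~(6) $P_0$ receives $\mu_k^{0,i}=s_k^{0,i}+\delta_b$. By the zero-sharing property of Section~\ref{sub:Secret shares updating}, adding the update shares amounts to adding a degree-$(t-1)$ polynomial with zero constant term; hence the pairs $(i+1,y_k^i)$ contributed by holders of $e$ all lie on a single polynomial $f_k=f+\sum_i f_{i,b}$ with $\deg f_k\le t-1$ and $f_k(0)=e$, whereas a non-holder's $y_k^i$ is (pseudo)random. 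If $c_k\ge t$, some $t$-subset of shares including $P_0$'s lies entirely on $f_k$, Lagrange interpolation returns $f_k$, the test $f_k(0)\stackrel{?}{=}e$ succeeds, and the test $f_k(i+1)\stackrel{?}{=}y_k^i$ recovers the holder set exactly; if $c_k<t$, every $t$-subset containing $P_0$'s share contains at least one (pseudo)random value, so the interpolant $g$ satisfies $g(0)=e$ with probability $1/p$, and a union bound over the $\le m\binom{n-1}{t-1}$ subsets and $m$ elements makes the false-positive probability negligible. Collecting the cases yields precisely the set $I$.

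For \textbf{security}, given $\mathbb{C}$ with $|\mathbb{C}|\le t-2$ and the corrupt parties' inputs/outputs $X,Y$ ($Y=I$ when $P_0\in\mathbb{C}$, else $Y=\bot$), I would construct $\mathsf{Sim}$ to: sample the corrupt random tapes; for each honest update-generator $P_i$ and bin $b$ sample a fresh uniform degree-$(t-1)$ polynomial $f_{i,b}$ with $f_{i,b}(0)=0$ and send the corresponding evaluations as the step-(4) messages to corrupt parties; answer every step-(3) OPPRF query by a corrupt receiver against an honest $P_0$ with a uniformly random value; and, for the step-(6) OPPRF where an honest $P_i$ is sender and a corrupt $P_0$ is receiver, answer a query $e_k^0$ by $s_k^{0,i}+\sum_{i'}f_{i',b}(i+1)$ whenever $c_k\ge t$ and $P_i$ is one of the holders listed in $I$ (this is computable from $P_0$'s Shamir polynomials, the corrupt update polynomials, and the sampled honest ones, so that the values delivered to $P_0$ reconstruct $f_k$ consistently with $I$), and by a uniformly random value otherwise; all remaining OPPRF exchanges are simulated directly from the $\mathcal{F}_{\text{OPPRF}}$ interface.

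\textbf{The main obstacle} is the indistinguishability argument. After invoking OPPRF security to pass from the hybrid to the real world, I must show the real and simulated hybrid views coincide up to the negligible hashing-failure and false-positive terms above. For the step-(4) traffic and for data the coalition computes itself, this reduces to the information-theoretic fact that any $\le t-2$ evaluations at nonzero points of a random degree-$(t-1)$ zero-polynomial are jointly uniform and independent, so $\mathsf{Sim}$'s freshly sampled $f_{i,b}$ induce the correct distribution. The hard part is the step-(6) outputs seen by a corrupt $P_0$: I would need the key lemma that, conditioned on the rest of the coalition's view, the honest parties' contributions for an element with $c_k<t$ are jointly uniform (so replacing them with uniform values is sound), while for $c_k\ge t$ they lie on exactly the $f_k$ that $\mathsf{Sim}$ recomputes. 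Proving this demands a careful count of how many independent evaluations of the honest zero-polynomial $\sum_{\text{honest}}f_{i,b}$ the coalition --- and in particular $P_0$, who learns one evaluation per honest update-generator from step~(4) --- can pin down, and an argument that this never reaches the $t-1$ needed to determine that polynomial, since the coalition holds at most $|\mathbb{C}|\le t-2$ evaluations from step~(4) and a sub-threshold element supplies too few further constraints through step~(6). Getting this bookkeeping exactly right --- tracking the precise evaluation points available to $P_0$ and confirming that no over-determined relation among honest values becomes visible for a sub-threshold element --- is where I expect essentially all of the difficulty to concentrate; the remainder is routine hybridization.
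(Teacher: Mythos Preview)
Your plan tracks the paper's proof closely: both argue correctness by the over-/under-threshold case split and the zero-constant property of the update polynomials, and both build a simulator that samples fresh honest zero-polynomials for the step-(4) traffic and answers a corrupt $P_0$'s step-(6) queries by ``compute the consistent value when $e_k^0\in I$ and $P_i$ is a listed holder, otherwise uniform.'' The paper makes the $P_0$-honest / $P_0$-corrupt split explicit while you unify them, and you are more careful to work in the $\mathcal{F}_{\text{OPPRF}}$-hybrid and to isolate the dimension count on $F_b=\sum_{\text{honest}}f_{i,b}$ as the crux; the paper dispatches that step in one line by appealing to ``the obliviousness of OPPRF,'' which is not really the relevant property.

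The gap you should worry about is exactly the one you flag, and it does not obviously close the way you hope. With $|\mathbb{C}|=t-2$ and $P_0\in\mathbb{C}$, the coalition already holds $t-2$ evaluations of the degree-$(t{-}1)$ zero-polynomial $F_b$ from step~(4); together with $F_b(0)=0$ this leaves \emph{one} remaining degree of freedom. Now suppose a sub-threshold element $e_k^0$ (so $c_k<t$) happens to be held by two honest parties $P_{i_1},P_{i_2}$. In the real execution both step-(6) outputs are programmed, and from $y_k^{i_1},y_k^{i_2}$ the coalition extracts $F_b(i_1{+}1),F_b(i_2{+}1)$, which are affinely related given its view; hence $(y_k^{i_1},y_k^{i_2})$ lies on a known line in $\mathbb{F}_p^2$ rather than being uniform. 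Your simulator (and the paper's) outputs independent uniform values here, which is distinguishable. A concrete instance is $n=t=4$, $\mathbb{C}=\{P_0,P_1\}$, and $e$ held by exactly $P_0,P_2,P_3$: this is sub-threshold, yet the coalition can test the linear relation and learn that both honest parties hold $e$. So ``a sub-threshold element supplies too few further constraints'' is not automatic---the count $|\mathbb{C}|+h_k\le t-1$ you would need can fail whenever two or more honest parties share a sub-threshold element of $S_0$, and neither your proposal nor the paper's argument addresses this.
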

\begin{proof}
    The proof consists of two parts: correctness and security. Due to the page limit, we put the formal correctness and security proofs of the ET-OT-MP-PSI in Appendix \ref{PROOF OF ET-OT-MP-PSI}
\end{proof}
}
\textcolor{black}{
\begin{theorem}
\textit{The ST-OT-MP-PSI realizes the functionality $\mathcal{F}_{\text{T-OT-MP-PSI}}^{n,m,t}$ and is secure against collusion among up to $n-1$ parties in the semi-honest model, given the statistical security parameter $\lambda$ and the computational security parameter $\kappa$. } 
\end{theorem}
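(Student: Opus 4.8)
The plan is to prove correctness and security separately, the latter by a simulation argument covering an arbitrary corrupt coalition $\mathbb{C}$ with $|\mathbb{C}|\le n-1$. For correctness the core is the identity derived in Section~\ref{sub:Security-Enhanced Traceable OT-MP-PSI}: when $P_0$ and $P_i$ hold a common element $e_k^0=e_k^i$ and it is routed to the same pair of Cuckoo/Simple bins, the value $P_0$ finally assigns to $P_i$ equals $y_k^i=s_k^{0,i}+\sum_{j=1}^{n-1}f_{j,b}(i+1)$, i.e. the evaluation at $x=i+1$ of a single degree-$(t-1)$ polynomial $f_k$ — namely $P_0$'s Shamir polynomial for the random secret $e_k^{0'}$ updated by $\sum_j f_{j,b}$ — which still has constant term $e_k^{0'}$ and of which $P_0$'s own updated share $y_k^0=s_k^{0,0}+\sum_j f_{j,b}(1)=f_k(1)$ is a correct point. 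Hence whenever $e_k^0$ is held by at least $t$ parties (always including $P_0$), there is a $t$-subset of $\{y_k^i\}$ containing $y_k^0$ all of whose points lie on $f_k$, so the matching reconstruction attempt recovers $f_k$ and in particular $f_k(0)=e_k^{0'}$, correctly flagging $e_k^0$; the subsequent test $f_k(i+1)\stackrel{?}{=}y_k^i$ then identifies exactly the holder set and its size. I would then bound the error sources — Cuckoo-hashing failure ($<2^{-\lambda}$ by the parameters of~\cite{kolesnikov2017practical}), the negligible OPPRF collision probability, and a spurious reconstruction (a sub-threshold element whose interpolation accidentally hits $e_k^{0'}$, or a non-holder value accidentally lying on $f_k$), each of order $1/|\mathbb{F}_p|$ because $e_k^{0'}$ is uniform and OPPRF/OLE outputs are pseudorandom on non-matching inputs — and conclude by a union bound over all of $P_0$'s elements and all $t$-subsets examined.

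For security I construct $\mathsf{Sim}$ by cases on whether $P_0\in\mathbb{C}$. When $P_0\notin\mathbb{C}$ the coalition view consists only of values corrupt parties sampled themselves, OPPRF transcripts (steps~3, 7, 8) and OLE transcripts (steps~5, 6) in which at least one endpoint is corrupt, and nothing from step~4 (whose message $f_{j,b}(1)$ goes to the honest $P_0$). The receiver-side security of OPPRF (Fig.~\ref{fig:OPPRF Functionality}) and of OLE lets $\mathsf{Sim}$ replace every value a corrupt receiver obtains from an honest counterpart by a fresh uniform element of $\mathbb{F}_p$, and their sender-side security lets $\mathsf{Sim}$ simulate the keys/hints a corrupt sender holds; $\mathsf{Sim}$ thus runs the honest side of each subprotocol on the corrupt inputs, substituting randomness for every honest-to-corrupt output, and indistinguishability follows from a hybrid that swaps one subprotocol instance at a time. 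The functionality output is not needed here, as $P_0$ is honest and the remaining parties output $\bot$.

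When $P_0\in\mathbb{C}$, let $H=[n]\setminus\mathbb{C}$ be the nonempty honest set; $\mathsf{Sim}$ is given $I$. It samples the honest parties' update polynomials $f_{j,b}$ \emph{genuinely} — these encode no private input, being uniform with zero constant term — and uses them for the step-4 messages to $P_0$ and the honest OLE masks, so that $\sum_{j\in H}f_{j,b}(1)$ and the honest contributions to $z_0^v$ are produced consistently (and look uniform from $P_0$'s view, each carrying an independent honest summand). It simulates every OPPRF/OLE transcript from the corrupt endpoints' data as in the first case. The only remaining choice is the value $y_k^i$ that $P_0$ obtains for each honest $P_i$ in the step-7 OPPRF (after correction by $z_0^{v'}$): if $e_k^0\in I$, its entry in $I$ names the holders, so $\mathsf{Sim}$ computes the genuine updated polynomial $f_k$ (all pieces of which — $P_0$'s original shares and all $f_{j,b}$ — it knows) and sets $y_k^i:=f_k(i+1)$ for honest holders and a uniform value for honest non-holders; if $e_k^0\notin I$, $\mathsf{Sim}$ sets $y_k^i$ uniform for every honest $P_i$; the step-8 index outputs $v'$ for honest senders are set uniformly, consistent with the above.

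The main obstacle is proving indistinguishability in the case $P_0\in\mathbb{C}$ — concretely, that the coalition cannot tell the simulated honest shares from the real ones. Here the threshold re-enters through OLE: by its sender-side security the coalition learns about the honest update contribution $g(\cdot):=\sum_{j\in H}f_{j,b}(\cdot)$ — a degree-$(t-1)$ polynomial with $g(0)=0$ — only the single value $g(1)$ from step~4 and, for each \emph{corrupt} holder $P_i$ of $e_k^0$, the value $z_0^v+z_1^v=\sum_j f_{j,b}(i+1)$ and hence $g(i+1)$, i.e. at most $2+c'$ evaluations of $g$ (counting $g(0)$), where $c'$ is the number of corrupt holders of $e_k^0$ other than $P_0$. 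When fewer than $t-1$ colluding parties hold $e_k^0$ we have $1+c'<t-1$, so $2+c'<t$ and these evaluations under-determine $g$, leaving $g(i+1)$ for each honest $P_i$ uniform given the coalition's view; since the ``correct'' share of a matching honest $P_i$ is $s_k^{0,i}+g(i+1)+(\text{terms known to the coalition})$ — uniform because $g(i+1)$ is — and the ``wrong'' share of a non-matching honest $P_i$ is an independent pseudorandom OPPRF output, both are uniform from $P_0$'s perspective, so substituting uniform randomness (and, when $e_k^0\notin I$, doing so for \emph{all} honest $P_i$) is sound. When at least $t-1$ colluding parties hold $e_k^0$, the element together with all its holders is already pinned down by $I$ (either $e_k^0\in I$, whose entry lists them, or $e_k^0\notin I$, which, since the coalition's own holders already number at least $t-1$, forces the honest holders to be empty), so $\mathsf{Sim}$ can reproduce the reconstructed shares exactly and no extra information leaks. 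Composing these facts with the OPPRF/OLE hybrids of the first case and the correctness error bound gives $\{\mathsf{Sim}(X,Y,\mathbb{C})\}\overset{c}{\equiv}\{\mathsf{view}^\pi_{\mathbb{C}}(X,Y)\}$, completing the proof.
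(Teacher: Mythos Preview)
Your proposal follows the same two-part structure as the paper's proof in Appendix~\ref{PROOF OF ST-OT-MP-PSI}: correctness via case analysis on whether $e_k^0$ meets the threshold, then a simulation argument split on whether $P_0$ is corrupt, invoking OPPRF and OLE obliviousness for the subprotocol transcripts and using the ideal output $I$ to program the honest parties' step-7 OPPRF values. The substantive difference is one of generality and explicitness. The paper only builds simulators for the two \emph{extreme} coalitions $|\mathbb{C}|=n-1$ (either $P_0$ alone honest, or exactly one $P_i$ honest), tacitly relying on the standard fact that semi-honest security against the maximal coalition subsumes smaller ones; with a single honest $P_i$ the argument reduces directly to OLE/OPPRF obliviousness without any threshold combinatorics. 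You instead treat an arbitrary coalition and make explicit the counting argument that is really the crux of the protocol's design --- namely that the coalition recovers at most $2+c'$ evaluations of the honest update polynomial $g$ (where $c'$ counts corrupt holders other than $P_0$), which under-determines a degree-$(t-1)$ polynomial exactly when fewer than $t-1$ colluders hold $e_k^0$, leaving every honest $g(i+1)$ uniform. This is precisely the mechanism (sketched informally in Section~\ref{sub:Security-Enhanced Traceable OT-MP-PSI}) that lifts the corruption threshold from $t-2$ to $n-1$, and your proof surfaces it more completely than the paper's; the paper's route, by contrast, is shorter because with a single honest party there is only one honest polynomial and no combinatorics to track.
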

\begin{proof}
    Similar to Theorem \ref{theorem 1}, the proof also consists of correctness and security. For brevity, we defer the full proof of the ST-OT-MP-PSI protocol in Appendix \ref{PROOF OF ST-OT-MP-PSI}
\end{proof}
}

\subsection{Complexity Analysis} \label{sub:Complexity Analysis}
The complexity comparison between our Traceable OT-MP-PSI protocols and related work with traceability \cite{mahdavi2020practical} is illustrated in Table \ref{tab:protocols_complexity}. 
To ensure consistency in comparison, we refer to $ P_0 $ as the Leader and all other participants $ P_i $ as the Clients throughout this section.

\begin{table*}
\caption{Analytic comparison of related work with our protocols. $n$ is the number of parties. $t$ is the threshold. Each party holds a set of size $m$. $\lambda$ and $\kappa$ are statistical and computational security parameters, respectively. $e$ is Euler’s constant.}
\centering
\setlength{\tabcolsep}{13pt}
\renewcommand{\arraystretch}{1.3}
\begin{tabular}{c|c|c|c|c|c}
\toprule
\multirow{2}{*}{\textbf{Protocol}} & \multicolumn{2}{c|}{\textbf{Communication}} & \multicolumn{2}{c|}{\textbf{Computation}} & \textbf{Corruption} \\ 
\cline{2-5}
 & \textbf{Leader} & \textbf{Client} & \textbf{Leader} & \textbf{Client} & \textbf{Resilience} \\ \hline
\hline
Mahdavi et al. \cite{mahdavi2020practical} & \multicolumn{2}{c|}{$O(nmt)$}
 & \multicolumn{2}{c|}{$O(m(nlog(\frac{m}{t}))^{2t})$} & $t-2^{*}$ \\ 
\hline
ET-OT-MP-PSI & $O(nm\lambda)$ & $O(nm\lambda)$ & $O(max\{t^2(\frac{e(n-1)}{t-1})^{t-1}),n\kappa\}m)$ & $O(max\{\kappa,nt\lambda\}m)$ & $t-2$ \\ 
\hline
ST-OT-MP-PSI & $O(n^2m\lambda)$ & $O(nm\lambda )$ & $O(max\{t^2(\frac{e(n-1)}{t-1})^{t-1}),n^2\lambda, n\kappa\}m)$ & $O(max\{\kappa,nt\lambda\}m)$ & $n-1$ \\ 
\bottomrule
\end{tabular}

\begin{flushleft}
\footnotesize
$^{*}$ In Mahdavi et al.'s protocol, it is required that certain designated roles must not collude. 
\end{flushleft}

\label{tab:protocols_complexity}
\end{table*}

\noindent\textbf{Communication complexity}.  
In ET-OT-MP-PSI, the \textcolor{black}{\textit{conditional secret sharing} phase} involves the Leader engaging with $n-1$ Clients to execute the OPPRF protocol, incurring a communication complexity of $O(nm\lambda)$. 
\textcolor{black}{As part of the \textit{secret shares update} step,} each Client transmits $m_b$ values to every other participant to update shares, resulting in an additional communication cost of $O(nm\lambda)$. 
The subsequent OPPRF with the Leader \textcolor{black}{in the \textit{conditional collection and reconstruction} phase} contributes a further $O(m\lambda)$ to the overall communication complexity.
 In ST-OT-MP-PSI, \textcolor{black}{the \textit{secret share update} procedure} requires the Leader to perform $n^2m_b$ additional OLE protocols, which involve the exchange of $O(n^2m\lambda)$ ciphertexts. Similarly, each Client executes $nm_b$ OLE protocols, corresponding to $O(nm\lambda)$ ciphertexts.

 \noindent\textbf{Computation complexity}. 
In ET-OT-MP-PSI, \textcolor{black}{during \textit{conditional secret sharing}}, the Leader generates $m$ polynomials of degree $t-1$ and evaluates them at $n$ points, resulting in a computational complexity of $O(nmt)$. The Leader also performs OPPRF with $n-1$ Clients, adding $O(nm\kappa)$. 
\textcolor{black}{For \textit{reconstruction}}, since the Leader’s shares are always correct, each of the $m$ elements requires $\binom{n-1}{t-1}$ operations, leading to a total complexity of $O(mt^2\binom{n-1}{t-1})$.
Using the approximation $\binom{n-1}{t-1} \leq (e(n-1)/(t-1))^{t-1}$, the complexity is relaxed to $O(mt^2(e(n-1)/(t-1))^{t-1})$, where $e$ is Euler’s constant. 
\textcolor{black}{In the phase of \textit{secret shares update}}, each Client generates $m_b$ polynomials of degree $t-1$ and evaluates them at $n$ points, contributing $O(nmt\lambda)$. Additionally, \textcolor{black}{for \textit{conditional collection}}, OPPRF between each Client and the Leader adds $O(m\kappa)$.
In ST-OT-MP-PSI, \textcolor{black}{during \textit{secret shares update}}, the Leader performs $n^2m_b$ additional OLE protocols, requiring $O(n^2m\lambda)$ encryptions and decryptions in our implementation. Clients execute $nm_b$ OLE protocols, involving $O(nm\lambda)$ homomorphic operations.

\noindent\textbf{\textcolor{black}{Comparision to Mahdavi et al.'s. protocol [17].} }
\textcolor{black}{
For communication complexity, our protocols maintain better scalability by avoiding any dependence on the threshold $t$. In contrast, Mahdavi et al.'s protocol incurs communication costs that grow with $t$, limiting its practicality in high-threshold settings.
For the computational complexity, compared to the computational complexity $O(m(n log(m/t))^{2t})$ of Mahdavi et al.’s protocol [17], our Traceable OT-MP-PSI protocols demonstrate better efficiency. Specifically, although both our protocols and Mahdavi et al.’s protocol similarly have exponential complexity with $t$, our protocols achieve a much smaller base and exponent $t$ rather than $2t$, resulting in less computational cost. 
The enhancement is primarily attributed to the integration of OPPRF with Shamir's secret sharing, which allows each intersection element to be precisely associated with its corresponding shares, instead of exhaustively trying all possible shares from the involved parties in Mahdavi et al.’s protocol when performing reconstruction, thereby significantly reducing the reconstruction time.
Lastly, for security, compared to the $ t-2 $ corruption tolerance in Mahdavi et al.'s protocol, our ET-OT-MP-PSI protocol achieves the same level of resistance while eliminating the assumption that certain special parties do not collude. Meanwhile, our ST-OT-MP-PSI protocol further strengthens security by tolerating collusion among up to $ n-1 $ semi-honest parties.
}

\section{Performance Evaluation}

\subsection{Implementation and Experimental Settings}
To evaluate the performance of the proposed Traceable OT-MP-PSI protocols, we implemented both protocols in C++\footnote{\url{https://github.com/Yank3l/T-OT-MP-PSI}}. The implementation relies on the NTL library\footnote{\url{https://libntl.org/}} for large number operations. Communication between the parties is handled using the Boost library, which provides robust tools for message-passing, networking, and parallel processing. 

\textcolor{black}{We implemented Shamir’s secret sharing using the NTL library, instantiating the finite field modulo the largest 128-bit prime \(p\), which allows us to accommodate the widest possible range of 128-bit elements and aligns with real-world deployment requirements. 
We adopt the table-based OPPRF construction of Kolesnikov et al.~\cite{kolesnikov2017practical}, which has favorable communication and computational cost. 
To satisfy the security assumption that the OPPRF and the Shamir's secret sharing operate over the same finite field $\mathbb{F}_p$, in the implementation, we adjust the table-based OPPRF by replacing XOR operations with modular addition in step 3, and modular subtraction in step 6, respectively. Note that, the programmed and the non-programmed points share the same distribution with such adjustment, and hence the receiver cannot distinguish between the programmed and non-programmed entries.
}

In the ST-OT-MP-PSI, we utilize the OLE protocol proposed by de Castro et al. \cite{de2021fast}\footnote{\url{https://github.com/leodec/ole_wahc}}, which is based on Ring Learning with Errors (RLWE).
However, the chosen OLE code does not natively support a 128-bit plaintext modulus. To address this limitation, we follow the method described in Section 5.2 of their paper and select $ p $ as the product of four smaller 32-bit prime numbers, i.e., $p = \prod_{i=0}^{3} p_i$, thereby extending the original OLE to support a 128-bit plaintext modulus. Since the modulus $p$ is the product of prime numbers, the implementation of this protocol leverages the Chinese Remainder Theorem (CRT). \textcolor{black}{Nevertheless, because we decompose each 128-bit share into four 32-bit CRT residues, our ST-OT-MP-PSI instantiation performs four independent OPPRF evaluations for every share distribution and reconstruction. To this end, we select the four largest 32-bit primes as CRT moduli. Although reducing each residue modulo a 32-bit prime introduces a larger bias than using a single 128-bit modulus, the joint distribution of programmed and non-programmed points remains identical and thus is computationally indistinguishable to the adversary.}

Our benchmarking experiments were conducted on a cloud server equipped with an Intel(R) Xeon(R) CPU running at 3.1GHz, featuring 80 vCores and 192GB of RAM, and operating on Ubuntu 22.04. In our experimental setup, each participant operated within a single process, and communication was conducted over a local network without bandwidth or latency constraints. The length of each element is 128 bits.
To better evaluate the performance of the proposed protocols, we performed experiments under varying settings of participant numbers and set sizes. 

We pick Mahdavi et al.'s protocol \cite{mahdavi2020practical} as a comparison baseline since both protocols similarly provide traceability in OT-MP-PSI. This comparison highlights that our protocols achieve significantly higher efficiency compared to the existing solution, while maintaining the same functionality.
The publicly available source code enables direct implementation\footnote{\url{https://github.com/cryspuwaterloo/OT-MP-PSI}} and consistent benchmarking under similar conditions.
To ensure a fair comparison, we adopted the same elements generation method as that used by Mahdavi et al. Among the two constructions presented in their work, we concentrated on the more efficient variant for benchmarking.

\subsection{Results Evaluation}

Tables \ref{protocol 1 result} and \ref{protocol 2 result} present the performance of our proposed Traceable OT-MP-PSI protocols for varying numbers of participants $ n $ and the set sizes $ m $. The results indicate a clear linear relationship between runtime of protocols and set size.
This observation is consistent with the computational complexity analysis in Section \ref{sub:Complexity Analysis}, where the complexity scales linearly with the set size $m$.
For example, in Table \ref{protocol 1 result}, with $ n = 5 $ participants and a threshold of $ t = 3 $, the runtime increases from 1.73s at $ m = 2^{14} $ to 6.23s at $ m = 2^{16} $ and 24.76s at $ m = 2^{18} $. 

The ET-OT-MP-PSI demonstrates strong performance, achieving a runtime of 45.21s for $ n = 10 $, $ t = 5 $, and $ m = 2^{16} $. Meanwhile, the ST-OT-MP-PSI achieves enhanced security by introducing the OLE, though at the cost of increased computational overhead. 
Specifically, this protocol requires an additional $O(n^2m\lambda)$ executions of the OLE protocol, which imposes additional performance overhead. 
For instance, with $ n = 5 $, $ t = 3 $, and $ m = 2^{16} $, the protocol completes in approximately 207s.
In practice, the choice between the two protocols depends on the specific balance between performance and security requirements. The ET-OT-MP-PSI is ideal for scenarios prioritizing speed, while the ST-OT-MP-PSI is better suited for scenarios where robust security is essential.

\begin{table}[h]
        \caption{\textcolor{black}{The average runtime (in seconds) over 10 trials of the ET-OT-MP-PSI.}}
        \label{protocol 1 result}
        \setlength{\tabcolsep}{8pt}
        \renewcommand{\arraystretch}{1.5}
        \begin{tabular}{c|c|c|c|c|c|c}
            \toprule
             \multirow{2}{*}{$\boldsymbol{m}$} & \multicolumn{6}{c}{$\boldsymbol{(n,t)}$} \\ \cline{2-7} 
              & (5,3) & (6,3) & (7,4) & (8,4) & (9,5) & (10,5) \\ \hline \hline
            $2^{12}$        & \textcolor{black}{0.68}  & \textcolor{black}{0.87}  & \textcolor{black}{1.23}  & \textcolor{black}{1.50}  & \textcolor{black}{2.62}  & \textcolor{black}{3.53}   \\ \hline
            $2^{14}$     & \textcolor{black}{1.73}  & \textcolor{black}{2.15}  & \textcolor{black}{3.34}  & \textcolor{black}{4.34}  & \textcolor{black}{8.37}  & \textcolor{black}{11.88}  \\ \hline
            $2^{16}$         & \textcolor{black}{6.23}  & \textcolor{black}{7.67}  & \textcolor{black}{12.32} & \textcolor{black}{15.71} & \textcolor{black}{31.80} & \textcolor{black}{45.21}  \\ \hline
            $2^{18}$         & \textcolor{black}{24.76} & \textcolor{black}{30.43} & \textcolor{black}{48.66} & \textcolor{black}{61.50} & \textcolor{black}{128.85} & \textcolor{black}{182.25} \\  
            \bottomrule
        \end{tabular}
        \vspace{-0.5cm}
\end{table}

\begin{table}[h]
        \caption{\textcolor{black}{The average runtime (in seconds) over 10 trials of the ST-OT-MP-PSI.}}
        \label{protocol 2 result}
        \setlength{\tabcolsep}{7pt}
        \renewcommand{\arraystretch}{1.5}
        \begin{tabular}{c|c|c|c|c|c|c}
        \toprule
        \multirow{2}{*}{$\boldsymbol{m}$} & \multicolumn{6}{c}{$\boldsymbol{(n,t)}$} \\
        \cline{2-7}
                         & (5,3) & (6,3) & (7,4) & (8,4) & (9,5) & (10,5) \\ 
        \hline \hline
         $2^{12}$                & \textcolor{black}{14.67}  & \textcolor{black}{20.55}  & \textcolor{black}{28.11}  & \textcolor{black}{36.31}  & \textcolor{black}{47.51}  & \textcolor{black}{60.04}   \\ \hline
        $2^{14}$                & \textcolor{black}{53.22}  & \textcolor{black}{76.22}  & \textcolor{black}{104.71}  & \textcolor{black}{135.91}  & \textcolor{black}{181.41}  & \textcolor{black}{229.09}  \\ \hline
        $2^{16}$                & \textcolor{black}{207.78}  & \textcolor{black}{298.64}  & \textcolor{black}{404.89} & \textcolor{black}{529.12} & \textcolor{black}{716.11} & \textcolor{black}{903.28}  \\ 
        \bottomrule
    \end{tabular}
          
    \end{table}

\begin{figure*}[ht]
\centering
\begin{subfigure}{0.49\linewidth}
 \centering
 \includegraphics[width=0.9\linewidth]{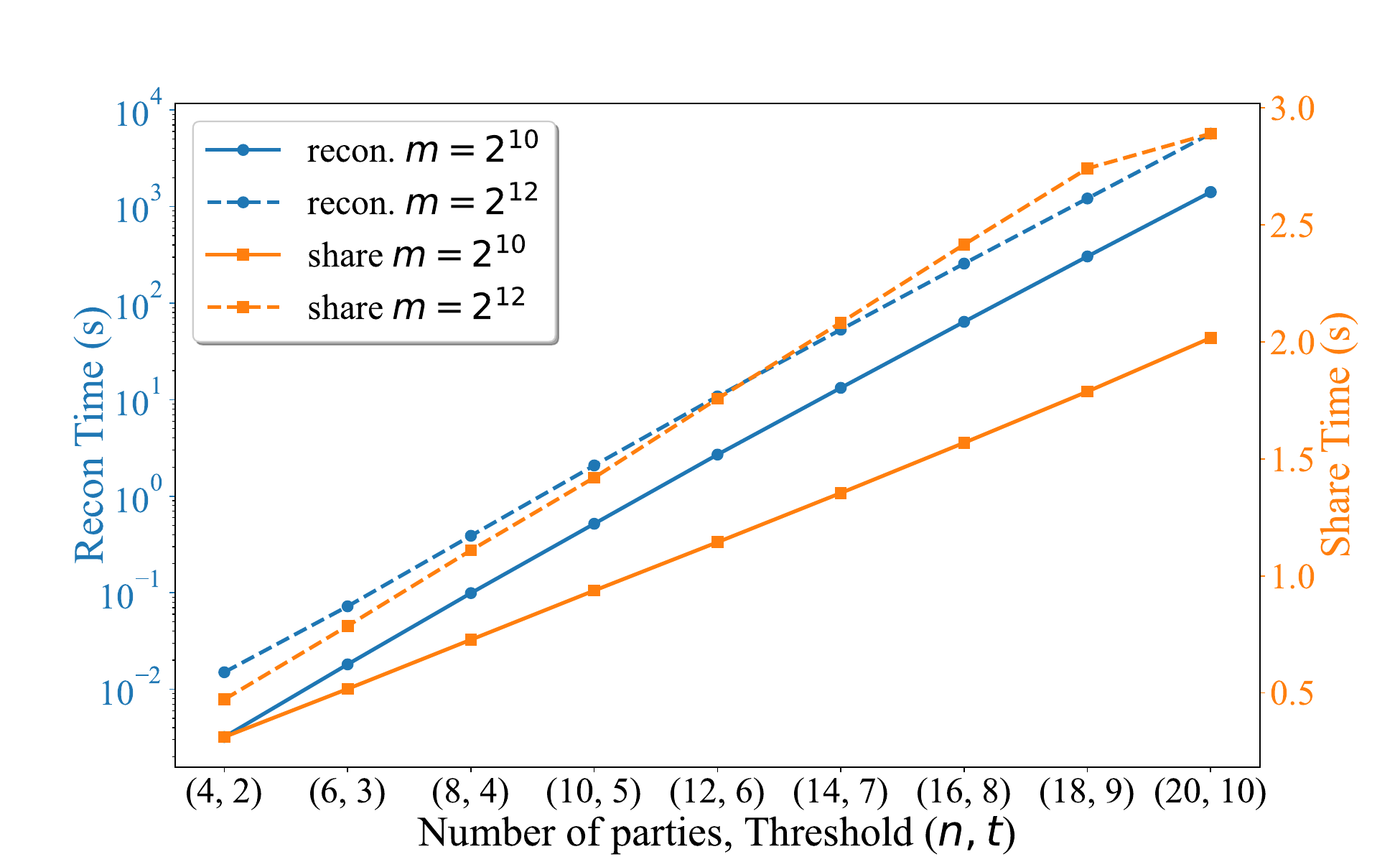}
 \caption{ET-OT-MP-PSI}
 \label{fig:sub1-ET-OT-MP-PSI}
\end{subfigure}
\hfill
\begin{subfigure}{0.49\linewidth}
\centering
\includegraphics[width=0.9\linewidth]{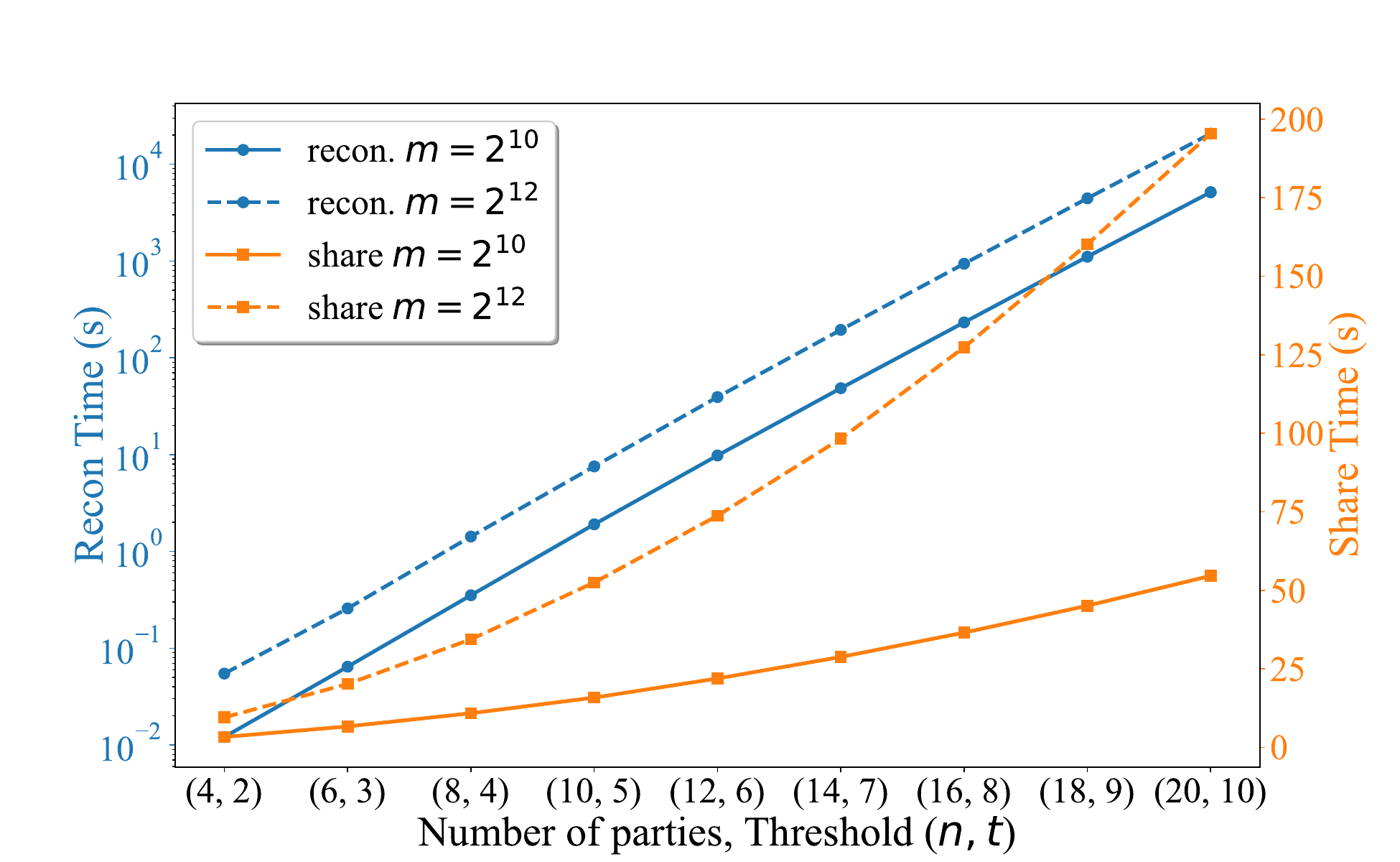}
\caption{ST-OT-MP-PSI}
\label{fig:sub2-ST-OT-MP-PSI}
\end{subfigure}
\caption{\textcolor{black}{The average runtime (in seconds) of our protocols with varying numbers of participants.}}
\label{fig:Performance of our protocols with varying numbers of parties.}
\vspace{-0.5cm}
\end{figure*}

\textcolor{black}{
Fig.~\ref{fig:Performance of our protocols with varying numbers of parties.} illustrates the runtime performance of our protocol under varying numbers of parties $n$ with a threshold setting of $t = n/2$, evaluated for two different set sizes: $m = 2^{10}$ and $m = 2^{12}$. We separately measure the runtime of the two phases of the protocol: the share phase, which includes both the initial Shamir's secret sharing and subsequent share updating, and the reconstruction phase, which performs secret reconstruction.
It is worth noting that the left $y$-axis is presented on a logarithmic scale, while the right $y$-axis uses a normal (linear) scale to better illustrate the growth trends.
When the threshold is fixed at $t = n/2$, we observe that in the ET-OT-MP-PSI, the share phase scales approximately linearly with the number of parties, whereas in the ST-OT-MP-PSI, it exhibits quadratic growth with respect to the number of parties. In contrast, the reconstruction phase shows clear exponential growth with $n$, which is expected since reconstructing each secret requires iterating over all possible subsets of $t$ shares.
}


\begin{figure*}[!h]
\centering
\begin{subfigure}{0.32\linewidth}
    \centering
    \includegraphics[width=\linewidth]{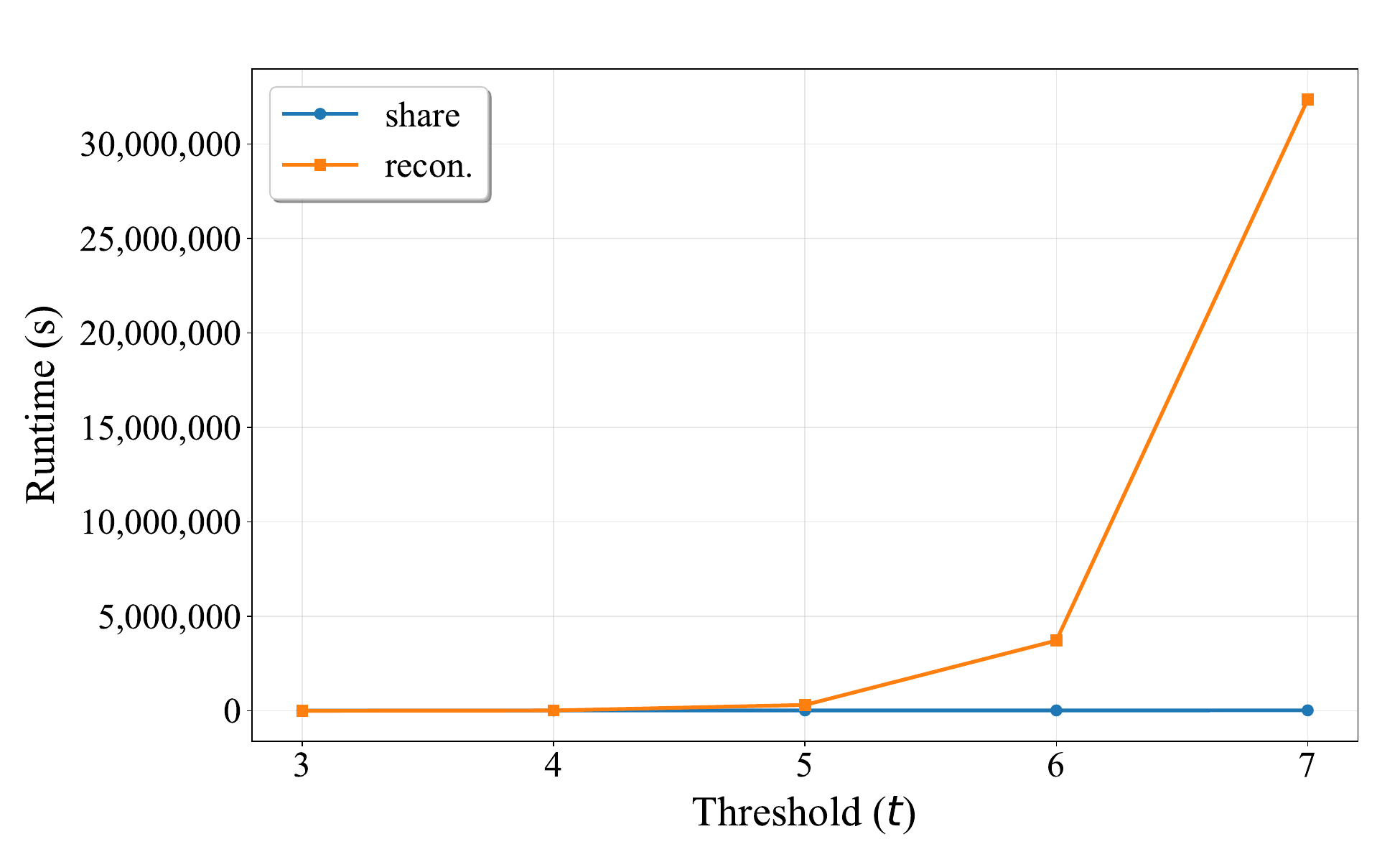}
    \caption{Mahdavi et al.'s protocol}
    \label{fig:sub1}
\end{subfigure}
\hfill
\begin{subfigure}{0.32\linewidth}
    \centering
    \includegraphics[width=\linewidth]{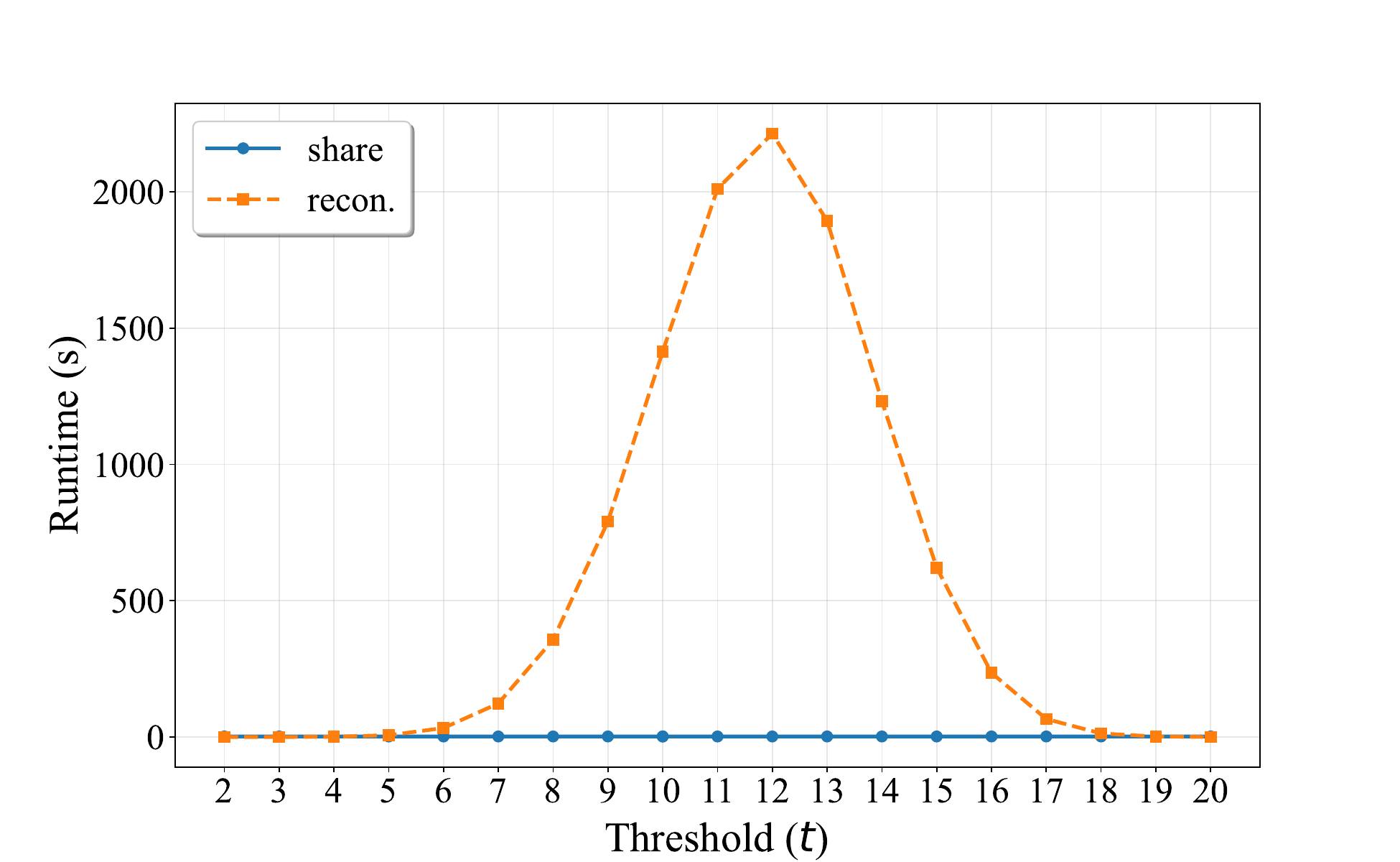}
    \caption{ET-OT-MP-PSI}
    \label{fig:ET-OT-MP-PSI}
\end{subfigure}
\hfill
\begin{subfigure}{0.32\linewidth}
    \centering
    \includegraphics[width=\linewidth]{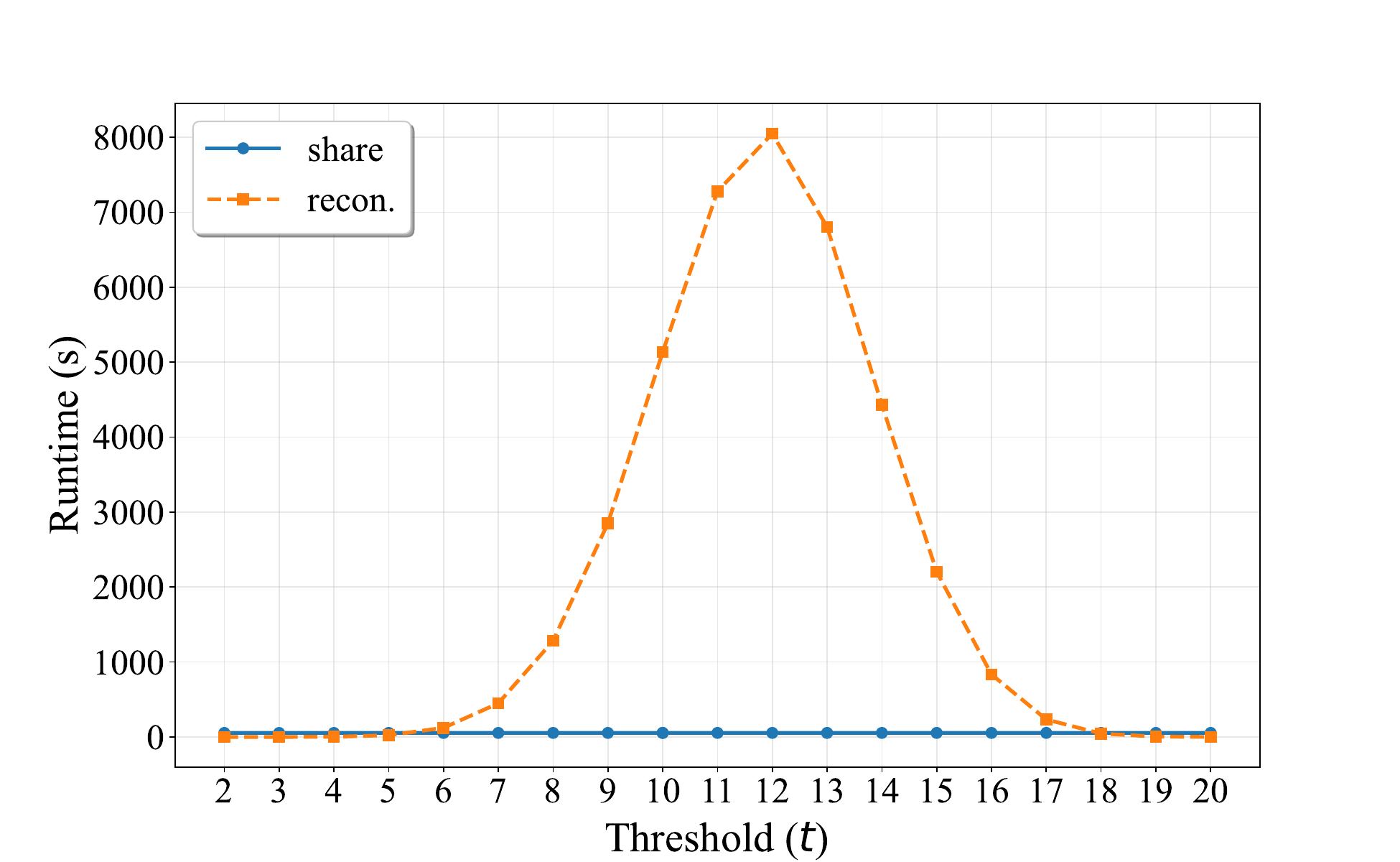}
    \caption{ST-OT-MP-PSI}
    \label{fig:ST-OT-MP-PSI}
\end{subfigure}
\caption{\textcolor{black}{Comparison of the runtime (in seconds) between Mahdavi et al.'s protocol and our protocols for varying threshold $t$.}}
\label{fig:Running time (in seconds) for varying threshold $t$}
\vspace{-0.5cm}
\end{figure*}

\textcolor{black}{
Fig.~\ref{fig:Running time (in seconds) for varying threshold $t$} shows the performance of our protocols and  Mahdavi et al.'s protocol \cite{mahdavi2020practical} across different threshold values $t$.  Due to the excessive runtime of Mahdavi et al.'s protocol, we evaluated it under a smaller setting with $n = 10$ and $m = 2^5$, whereas our protocols were tested up to larger parameters with $n = 20$ and $m = 2^{10}$. Notably, when $t > 7$, Mahdavi et al.'s protocol exceeds the evaluation time limit and is thus not presented in the figure. 
In the reconstruction phase, the runtime of both our protocols and Mahdavi et al.'s grows exponentially with the threshold $t$, which aligns with our computational complexity analysis.
We observe that Mahdavi et al.'s protocol exhibits a rapid increase in reconstruction runtime, which continues to grow until $t = n$, making it suitable only for small values of $t$. In contrast, for our protocols, the reconstruction runtime reaches its peak when the number of participants $n$ and set size $m$ are fixed, and the threshold $t$ approaches $(n+1)/2$. This behavior is expected, as the number of combinations $\binom{n-1}{t-1}$ is maximized near this point. Due to the smaller base and exponent, our protocols experience a much slower rate of growth, and during the growth phase, the runtime is consistently much smaller than that of Mahdavi et al.'s protocol for the same settings.
}

In the performance comparison, we focused on the more efficient version of the protocol proposed by Mahdavi et al. \cite{mahdavi2020practical}. To provide a comprehensive evaluation, we tested the protocols in two distinct scenarios: one involving a larger number of participants with smaller sets and the other featuring fewer participants with larger sets.
For the first scenario, with more participants and smaller sets, we set the number of participants to 10, the threshold to 5, and the set sizes to $ 2^4, 2^5, 2^6, $ and $ 2^7 $. In the second scenario, with fewer participants and larger sets, we set the number of participants to 5, the threshold to 3, and the set sizes to $ 2^{10}, 2^{12}, 2^{14}, $ and $ 2^{16} $. Table \ref{comparison} summarizes the performance of the protocols under these settings. 

The results demonstrate that both the ET-OT-MP-PSI and  ST-OT-MP-PSI consistently surpass Mahdavi et al.'s protocol in shares generation, reconstruction, and overall runtime across all evaluated scenarios.
For instance, in terms of overall runtime, with 10 participants, a threshold of 5, and a set size of $ 2^7 $, our protocols are \textcolor{black}{ 4312$ \times $} and \textcolor{black}{ 637$\times $} faster, respectively, compared to Mahdavi et  al.'s protocol. Similarly, with 5 participants, a threshold of 3, and a set size of $ 2^{14} $,
\textcolor{black}{
our protocols achieve speedups of  15056$ \times $ and  505$ \times $, respectively.
To gain deeper insights into the performance advantages of our protocols, we analyze the share and reconstruction phases individually.
In the share phase, ET-OT-MP-PSI utilizes a combination of Shamir's secret sharing and OPPRF. These techniques are predominantly based on efficient symmetric-key operations, which are computationally lightweight. In contrast, the share stage in Mahdavi et al.'s protocol relies on Paillier homomorphic encryption, which is significantly more computationally expensive. The ST-OT-MP-PSI further introduces OLE to enable secure share updates, which incurs a moderate computational overhead but remains more efficient than the homomorphic encryption used in Mahdavi et al.'s scheme. 
In the reconstruction phase, our protocols also demonstrate superior efficiency. For instance, with 10 participants, a threshold of 5, and a set size of $2^7$, our protocols are 45215$\times$ and 13761$\times$ faster, respectively, compared to Mahdavi et al.'s protocol.
By leveraging OPPRF and Shamir's secret sharing, we reduce the reconstruction complexity and significantly lowering the computational overhead required for reconstruction operations.}

These results clearly show the superior efficiency of our proposed protocols compared to Mahdavi et al.'s protocol. The consistent performance improvements across various settings highlight  the practicality of both the ET-OT-MP-PSI and ST-OT-MP-PSI protocols. By offering a flexible trade-off between runtime efficiency and security, our protocols are positioned to address a variety of real-world needs.

 \begin{table*}[!h]
    \caption{\textcolor{black}{Comparison of the overall runtime (in seconds) between Mahdavi et al.'s protocol and our proposed protocols across various settings.}}
    \label{comparison}
    \centering
    \setlength{\tabcolsep}{11pt}
    \renewcommand{\arraystretch}{1.3}
    \begin{tabular}{c|c|cccc|cccc}
    
        \toprule
        \multicolumn{2}{c|}{$\boldsymbol{(n,t)}$}   &      \multicolumn{4}{c|}{(10,5)} &   \multicolumn{4}{c}{(5,3)}  \\ 
        \hline
         \multicolumn{2}{c|}{$\boldsymbol{m}$}  & \centering $2^4$ & $2^5$ & $2^6$ & $2^7$ & $2^{10}$ & $2^{12}$ & $2^{14}$ & $2^{16}$  \\

        \hline \hline
        \multirow{3}{*}{\textbf{Mahdavi et al.} \cite{mahdavi2020practical}}        & share            & 83.24  & 167.05 & 335.73 & 672.90 & 1653.78 & 6585.70  & 26395.50 & -$^{*}$   \\
                                        & recon.            & 79.34  & 308.37  & 993.63  & 3165.06 & 13.31  & 62.99 & 404.94 & -  \\ 
                                        & total            & 162.58 & 475.42 & 1329.36 &3837.96 & 1667.09 & 6648.69  & 26800.44 & - \\
        \hline
        \multirow{3}{*}{\textbf{ET-OT-MP-PSI}}     & share$^{**}$     & \textcolor{black}{0.77}        & \textcolor{black}{0.77}         & \textcolor{black}{0.79}        & \textcolor{black}{0.79}   & \textcolor{black}{0.42}  & \textcolor{black}{0.65}  & \textcolor{black}{1.56} &\textcolor{black}{5.42}  \\
                                        & recon.           & \textcolor{black}{0.01}       & \textcolor{black}{0.02}       & \textcolor{black}{0.03}        & \textcolor{black}{0.07}  & \textcolor{black}{0.01}  & \textcolor{black}{0.05}  & \textcolor{black}{0.22} &\textcolor{black}{0.88}  \\ 
                                        & total           & \textcolor{black}{0.78}       & \textcolor{black}{0.79}        & \textcolor{black}{0.82}      & \textcolor{black}{0.86}  & \textcolor{black}{0.43}  &  \textcolor{black}{0.70} & \textcolor{black}{1.78} &\textcolor{black}{6.29}  \\
        \hline
        \multirow{3}{*}{\textbf{ST-OT-MP-PSI}}     & share$^{**}$      & \textcolor{black}{4.23}       & \textcolor{black}{4.60}        &  \textcolor{black}{4.99}       & \textcolor{black}{5.79}  & \textcolor{black}{4.84}  & \textcolor{black}{14.38}  & \textcolor{black}{52.26} &\textcolor{black}{205.04}  \\
                                        & recon.           & \textcolor{black}{0.03}       & \textcolor{black}{0.05}        &  \textcolor{black}{0.11}       & \textcolor{black}{0.23} & \textcolor{black}{0.05} & \textcolor{black}{0.19}  & \textcolor{black}{0.77}  & \textcolor{black}{3.06}   \\ 
                                        & total           &  \textcolor{black}{4.26}      & \textcolor{black}{4.66}        &  \textcolor{black}{5.10}       & \textcolor{black}{6.02} & \textcolor{black}{4.89}  &  \textcolor{black}{14.57} & \textcolor{black}{53.02}  & \textcolor{black}{208.11}  \\
        \bottomrule
    \end{tabular}
    \begin{flushleft}
    \footnotesize
    $^{*}$ Cells with ``-'' denote the task could not be completed within the testing time.  
    \\
    $^{**}$ ``share'' includes both secret sharing and shares update phase. 
    \end{flushleft}
    \vspace{-0.5cm}
\end{table*}

\section{Related Work}
\subsection{Multi-Party PSI and Variants }
With the wide range of applications for  MP-PSI, the past decade has witnessed the development of numerous protocols aimed at tackling challenges in both efficiency and security with diverse cryptographic techniques.
Freedman et al. \cite{freedman2004efficient} proposed the first MP-PSI in the semi-honest model, relying on oblivious polynomial evaluation (OPE) with homomorphic encryption. This approach was later adopted by other works, including \cite{cheon2012multi,dachman2011secure,hazay2017scalable,sang2007privacy}.
Miyaji and Nishida \cite{miyaji2015scalable} combined exponential ElGamal encryption with Bloom filters to design an MP-PSI, which relies on a trusted third party and is applied to medical data analysis  \cite{miyaji2017privacy}.
Kolesnikov et al. \cite{kolesnikov2017practical} proposed three constructions for instantiating oblivious programmable pseudorandom function (OPPRF) using oblivious transfer. Building on this, the authors combined zero-value secret sharing to develop the first efficient MP-PSI.
Inbar et al. \cite{inbar2018efficient} extended the two-party PSI by Dong et al. \cite{dong2013private} to a multi-party setting by utilizing the mergeability of garbled Bloom filters.
The first practically efficient MP-PSI with malicious security, introduced by Ben Efraim et al. \cite{ben2022psimple}, skillfully integrates techniques from semi-honest MP-PSI \cite{inbar2018efficient} and malicious two-party PSI \cite{rindal2017improved}, leveraging oblivious transfer and garbled Bloom filters.
Chandran et al. \cite{chandran2021efficient} proposed a modification to Kolesnikov et al.'s protocol \cite{kolesnikov2017practical}, replacing the construction of secret sharing that XOR to zero with Shamir’s secret sharing scheme, resulting in a more efficient MP-PSI.
In contrast to \cite{kolesnikov2017practical} and \cite{chandran2021efficient}, which rely on the OPPRF, Wu et al. \cite{wu2024ring} adopt the more efficient oblivious PRF (OPRF) and a data structure called the oblivious key-value store (OKVS), leading to the development of two MP-PSI: O-Ring and K-Star, designed to address distinct application needs.

Over time, MP-PSI has evolved into a broader family of protocols tailored to meet diverse application-specific needs. Notable variants of MP-PSI include multi-party private set intersection cardinality (MP-PSI-CA) \cite{debnath2021secure,trieu2022multiparty,liu2020quantum,shi2022quantum}, which calculates the size of the intersection without revealing the intersecting elements, and multi-party delegated PSI \cite{abadi2022multi}, which allows parties to outsource the storage of their datasets to a cloud computing service. 
Other extensions, such as MP-PSI-CA-sum \cite{arbitrary2023practical}, differ from MP-PSI-CA in that MP-PSI-CA-sum additionally outputs the sum of the associated integer values of all the data belonging to the intersection, providing richer insights beyond just the cardinality.

\subsection{Multi-Party PSI with Threshold}
As an extension of MP-PSI, variants with threshold such as T-MP-PSI and OT-MP-PSI have attracted the attention of researchers.
Kissner and Song \cite{kissner2004private} presented the first T-MP-PSI and OT-MP-PSI. They leveraged Paillier encryption to develop algorithms for encrypted polynomial operations. 
Miyaji and Nishida \cite{miyaji2015scalable} introduced a T-MP-PSI called d-and-over MPSI, combining  Bloom filters and exponential ElGamal encryption. However, the protocol relies on the assumption of a trusted third party. 
Mahdavi et al. \cite{mahdavi2020practical} proposed a new primitive called oblivious pseudo-random secret sharing (OPR-SS), which leverages oblivious pseudo-random function (OPRF) and Shamir's secret sharing. Building on this primitive, a new OT-MP-PSI with  traceability  was developed. To improve efficiency, the Paillier cryptosystem was introduced to reduce reconstruction time. Nevertheless, the optimized OT-MP-PSI  remains impractical for real-world use. Notably, Mahdavi et al.’s OT-MP-PSI  not only computes the intersection but also reveals the holders of the intersecting elements.
Bay et al. \cite{bay2021practical} presented a novel T-MP-PSI  that utilizes Bloom filters and threshold Paillier encryption. The protocol verifies whether an element is held by at least $ t $ participants through two rounds of multi-party secure comparison protocol (SCP). 
Chandran et al. \cite{chandran2021efficient} introduced a T-MP-PSI  called Quorum PSI. A major limitation of their protocol is its dependence on the assumption that the majority of parties are honest.
Ma et al. \cite{ma2024over} presented a novel OT-MP-PSI by introducing the dual cloud framework. In this design, the clients only need to pre-process the data and delegate the subsequent computation to cloud servers, which substantially reduces both the computational and communication overhead on the clients.
Yang et al. \cite{yang2024efficient} proposed the first unbalanced T-MP-PSI based on fully homomorphic encryption. Their construction achieves logarithmic communication complexity in the semi-honest setting, thereby offering a significant improvement in efficiency compared with previous work.

\section{Conclusion}
Most MP-PSI protocols with threshold, being fully anonymous, are often unsuitable for regulatory scenarios.
Moreover, the existing related scheme with traceability exhibits limitations in terms of both security and performance.
This paper introduces two novel Traceable Over-Threshold
Multi-Party Private Set Intersection (T-OT-MP-PSI) protocols to address more flexible privacy-preserving set intersection challenges. The first protocol Efficient T-OT-MP-PSI leverages OPPRF and Shamir's secret sharing to achieve high efficiency in the semi-honest model, ensuring resilience against up to $t-2$ colluding participants. The second protocol Security-enhanced T-OT-MP-PSI enhances security by introducing the oblivious linear evaluation  protocol, improving its ability to resist collusion by up to $n-1$ participants. Experimental results demonstrate the practicality and strong performance of both protocols, showing significant improvement over the existing solution. For instance, with 5 participants, a threshold value of 3, and set sizes of $2^{14}$, our protocols are \textcolor{black}{15056$\times$} (ET-OT-MP-PSI) and \textcolor{black}{505$\times$} (ST-OT-MP-PSI) faster than the work of Mahdavi et al., respectively.


\section*{Acknowledgment}
We thank the anonymous reviewers for their thoughtful comments. This work was supported in part by the National Natural Science Foundation of China Youth Project (No.62202102),  Scientific and Technological Project of Fujian Province of China (No.2024J08162),  the CCF-NSFOCUS `Kunpeng' Research Fund (No.CCF-NSFOCUS2024004),  National Key RD Plan of China (2020YFB1005803) and Key RD Plan of Shandong Province (2020CXGC010115).



%
\bibliographystyle{IEEEtran}
\bibliography{ref}

\section*{\scshape Appendix}

\addcontentsline{toc}{section}{APPENDIX}
\renewcommand{\thesubsection}{\Alph{subsection}.}

\subsection{Correctness and Security Proofs of ET-OT-MP-PSI}\label{PROOF OF ET-OT-MP-PSI}
\noindent\textbf{Theorem 1.}
\textit{The ET-OT-MP-PSI realizes the functionality $\mathcal{F}_{\text{T-OT-MP-PSI}}^{n,m,t}$ and is secure against collusion among up to $t - 2$ parties in the semi-honest model, given the statistical security parameter $\lambda$ and the computational security parameter $\kappa$.}
\subsubsection{Correctness}
\begin{proof}
    To analyze the correctness of the protocol, we consider the following two cases for each element $e_k^0$ in the set of $P_0$:
    \begin{itemize}
        \item \textbf{Case 1:} $ e_k^0 $ is an element of the intersection, i.e., it is held by at least $t$ parties, including $P_0$.
        \item \textbf{Case 2:} $ e_k^0 $ is not an element of the intersection, meaning that fewer than $t$ parties hold this element.
    \end{itemize}  
    \noindent\underline{\textit{Case 1: $ e_k^0 $ is an element of the intersection.}}   
    \begin{enumerate}
        \item \textbf{Conditional secret sharing.} $P_0$ treats the element $e_k^0$ as a secret and applies Shamir's secret sharing scheme to generate $n$ shares $s_k^{0,0}, \ldots, s_k^{0,n-1}$. These shares are then conditionally delivered to the other parties via OPPRF. Since $e_k^0$ is an element of the intersection, at least $t$ parties will receive the correct shares, i.e., the OPPRF output satisfies $\hat{s}_k^{0,i} = s_k^{0,i}$ for those parties.       
        \item \textbf{Secret shares update.} 
        For the $b^{\text{th}}$ bin, every party $P_i$, where $i \in [1, n-1]$, generates a polynomial $f_{i,b}(\cdot)$ and sends the evaluation $f_{i,b}(j+1)$ to each other party $P_j$, where $j \in [n]$. Each party $P_j$ then adds up the $n-1$ received values to compute
        $
        \delta_b = f_{1,b}(j+1) + \dots + f_{n-1,b}(j+1).
        $
        The secret share is then updated as $\mu_k^{0,i} = \hat{s}_k^{0,i} + \delta_b$. Since at least $t$ parties have received correct shares, at least $t$ of the updated shares are also correct.
        \item \textbf{Conditional collection and reconstruction.} 
        Each party $P_i$, where $i \in [1, n-1]$, invokes OPPRF protocol with $P_0$ to conditionally deliver the updated share $\mu_k^{0,i}$. Since the element $e_k^0$ is held by at least $t$ parties, $P_0$ obtains at least $t$ correct shares from the OPPRF, denoted as $y_k^i$, where $y_k^i = \mu_k^{0,i}$. Then, $P_0$ attempts to reconstruct the original secret by selecting $t$ values from the $n$ received outputs. When the selected $t$ shares are all correct, the original secret can be reconstructed, i.e., ${Recon}(y_k^i) = e_k^0$. Therefore, $e_k^0$ can be identified as an element in the intersection, and the parties holding correct shares are the holders of this intersection element.
    \end{enumerate}    
   \noindent\underline{\textit{Case 2:$ e_k^0 $ is not an element of the intersection.}}
    \begin{enumerate}
        \item \textbf{Conditional secret sharing.}
        The procedure is the same as in Case~1. However, since the element $e_k^0$ is not an element of the intersection—i.e., it is held by fewer than $t$ parties—the number of correct secret shares obtained by $P_0$ through the OPPRF protocol is less than $t$.
        \item \textbf{Secret shares update.} 
        The execution process is identical to that of Case~1. Since fewer than $t$ parties obtained correct shares in phase~1, the number of correct updated shares after the shares update phase remains less than $t$.
        \item \textbf{Conditional collection and reconstruction.} 
        The number of correct shares that $P_0$ obtains from the OPPRF outputs is less than $t$. When $P_0$ attempts to reconstruct the secret, it fails to recover the original secret with overwhelming probability under the given security parameters $\lambda$ and $\kappa$. As a result, $P_0$ determines that the element $e_k^0$ is not part of the intersection.
    \end{enumerate}
\end{proof}

\subsubsection{Security}
\begin{proof}
    As discussed in Section \ref{sub:Security-Enhanced Traceable OT-MP-PSI}, the ET-OT-MP-PSI is insecure in the presence of collusion among $ t-1 $ corrupted parties. Therefore, according to Definition 1, we prove that it is secure in the presence of collusion among up to $t - 2$ semi-honest adversaries. The following two distinct collusion scenarios should be considered. Here, we assume that Shamir's secret sharing and the OPPRF operate over the same field \( \mathbb{F}_p \).
    Let $\mathbb{C}$ and $\mathbb{H}$ be a coalition of corrupt and honest participants respectively.    And let $X$ and $Y$ be the inputs and outputs of the the coalition $ \mathbb{C} $.
    \begin{itemize}
        \item \textbf{Case 1:} Party $P_0$ is honest, and $ t-2 $ other parties are colluding, i.e. $\mathbb{C}\subseteq \{P_1, \dots, P_{n-1}\},|\mathbb{C}| = t - 2$.
        \item \textbf{Case 2:} Party $ P_0 $ is corrupted and colludes with $ t-3 $ other parties, i.e. $\mathbb{C} = \{P_0\}\cup \mathbb{C}_1$, where $\mathbb{C}_1\subseteq\{P_1, \dots, P_{n-1}\},|\mathbb{C}_1|=t-3$.
    \end{itemize}
    \noindent\underline{\textit{Case 1: Party $P_0$ is honest.}}
    
    In this case, the simulator $\mathsf{Sim}$ is given the inputs $X$ and outputs $Y=\perp$ of the corrupted parties $\mathbb{C}$, and runs as follows:
     \begin{enumerate}
        \item \textbf{Conditional secret sharing.} 
        $\mathsf{Sim}$ samples $t-2$ random values $\overline{\hat{s}_k^{0,i}}\leftarrow\mathbb{F}_p$, as the programmed outputs of the OPPRF.
        \item \textbf{Secret shares update.} 
        For the $b^{\text{th}}$ bin, $\mathsf{Sim}$ randomly generates polynomial $\overline{f_{j,b}(\cdot)}$ with a constant term of 0 and a degree of at most $t-1$, where $j\in\mathbb{H}$. Then, $\mathsf{Sim}$ computes $\overline{\delta_b}=\sum_{j\in \mathbb{C}}f_{j,b}(i+1) \ + \sum_{j\in \mathbb{H}}\overline{f_{j,b}}(i+1)$ and $\overline{y_i^k}= \overline{\hat{s}_k^{0,i}}+\overline{\delta_b}$.
        \item \textbf{Conditional collection and reconstruction.}
        At this stage, the corrupted parties receive no input. Therefore, the simulator $\mathsf{Sim} $  generates their corresponding views by faithfully following the protocol steps.
    \end{enumerate}

    Given the statistical security parameter $\lambda$ and the computational security parameter $\kappa$, both the OPPRF and our secret sharing scheme operate over the same finite field $\mathbb{Z}_p$. Now we argue that the views generated by $\mathsf{Sim}$ are computationally indistinguishable from those in the real execution.
    \begin{itemize}
        \item In the real world, according to the obliviousness of OPPRF, each $\hat{s}_k^{0,i}$ is indistinguishable from $\overline{\hat{s}_k^{0,i}}$ for OPPRF receiver, i.e. $\hat{s}_k^{0,i} \overset{c}{\equiv} \overline{\hat{s}_k^{0,i}} $. So the simulated views are computationally indistinguishable from the views in the real execution.
        \item During the real execution, the honest parties generate polynomials ${f_{j,b}(\cdot)}$ randomly, and each evaluation ${f_{j,b}(i+1)} \leftarrow\mathbb{F}_p$. Similarly, the simulator generates values $\overline{f_{j,b}(i+1)} \leftarrow\mathbb{F}_p$, which means ${f_{j,b}(i+1)}  \overset{c}{\equiv} \overline{f_{j,b}(i+1)}$ and hence ${\delta_b} \overset{c}{\equiv}\overline{\delta_b}$.
    \end{itemize}
    
    Therefore, we have $\{\mathsf{Sim}(X,Y,\mathbb{C})\} \overset{c}{\equiv} \{ \mathsf{view}^\pi_\mathbb{\mathbb{C}}(X,Y)\}$.

    \noindent\underline{\textit{Case 2: Party $P_0$ is corrupted.}}
    
    In this case, colluding parties learn the final outputs $I$ of the protocol. $\mathsf{Sim}$ is given the inputs $X$ and outputs $Y=I$ of the corrupted parties $\mathbb{C}$, and runs as follows:
    \begin{enumerate}
        \item \textbf{Conditional secret sharing.} 
        $\mathsf{Sim}$ selects $t-2$ random values $\overline{\hat{s}_k^{0,i}}\leftarrow\mathbb{F}_p$, as the outputs of the OPPRF.

        \item \textbf{Secret shares update.} This process is identical to Case 1: the simulator $\mathsf{Sim}$ generates the simulated polynomial $\overline{f_{j,b}(\cdot)}$ and computes $\overline{\delta_b}$.

        \item \textbf{Conditional collection and reconstruction.} If $ e_k^0 \notin I $, or $e_k^0 \in I$ but $e_k^0 \notin S_i$, $\mathsf{Sim}$ selects random values as the OPPRF outputs $\overline{y_k^i}$ received by $ P_0 $ from the honest party $P_i$. If $ e_k^0 \in I $ and $e_k^0 \in S_i$, the simulator $\mathsf{Sim} $ computes the correct outputs of the OPPRF protocol:
        $$
        \overline{y_k^i} = s_k^{0,i} + \sum_{P_j \in H} \overline{f_{j,b}(i+1)} + \sum_{P_j \in C} f_{j,b}(i+1).
        $$
        
    \end{enumerate}
    
    Given the statistical security parameter $\lambda$ and the computational security parameter $\kappa$, both the OPPRF and our secret sharing scheme operate over the same finite field $\mathbb{Z}_p$. Now we argue that the views generated by $\mathsf{Sim}$ are computationally indistinguishable from those in the real execution.
    
    \begin{itemize}
        \item According to the obliviousness of OPPRF, the receiver cannot distinguish between the simulated value $\overline{\hat{s}_k^{0,i}}$ and the actual value $\hat{s}_k^{0,i}$ generated during the real execution of the protocol. Therefore, we have $\hat{s}_k^{0,i} \overset{c}{\equiv} \overline{\hat{s}_k^{0,i}}$.
        \item Similar to Case 1, $\mathsf{Sim}$ samples random polynomial $\overline{f_{j,b}(\cdot)}$ from the same distribution over $\mathbb{F}_p$ as the honest parties do in the real execution. Therefore, we have $f_{j,b}(i+1) \overset{c}{\equiv} \overline{f_{j,b}(i+1)}$ and $ \delta_b \overset{c}{\equiv} \overline{\delta_b}$.
        \item $\mathsf{Sim}$ is given the final output $ I $ of the protocol. From this, $\mathsf{Sim}$ can determine the correct OPPRF outputs for each element in the intersection.  
        If $ e_k^0 \notin I $, or $ e_k^0 \in I $ but $ e_k^0 \notin S_i $, then by the obliviousness of OPPRF, the random value $\overline{y_k^i}$ chosen by $\mathsf{Sim}$ satisfies  
        $
        \overline{y_k^i}\overset{c}{\equiv} \text{output}_{\mathsf{OPPRF}}.
        $
        If $ e_k^0 \in I $ and $ e_k^0 \in S_i $, then $\mathsf{Sim}$ can compute the correct OPPRF output.  
        Therefore, the simulated value $ \overline{y_k^i} $ satisfies  
        $
        y_k^i \overset{c}{\equiv} \overline{y_k^i}.
        $
    \end{itemize}
     
     Therefore, we have $\{\mathsf{Sim}(X,Y,\mathbb{C})\} \overset{c}{\equiv} \{ \mathsf{view}^\pi_\mathbb{\mathbb{C}}(X,Y)\}$.
\end{proof}

\subsection{Correctness and Security Proofs of ST-OT-MP-PSI}\label{PROOF OF ST-OT-MP-PSI}
\noindent\textbf{Theorem 2.}
\textit{The ST-OT-MP-PSI realizes the functionality $\mathcal{F}_{\text{T-OT-MP-PSI}}^{n,m,t}$ and is secure against collusion among up to $n-1$ parties in the semi-honest model, given the statistical security parameter $\lambda$ and the computational security parameter $\kappa$. } 

\subsubsection{Correctness}
\begin{proof}
     Similarly, for each element $e_k^0$ in the set of $P_0$, we analyze the following two distinct cases:

    \begin{itemize}
        \item \textbf{Case 1:} $ e_k^0 $ is an element of the intersection, i.e., it is held by at least $t$ parties, including $P_0$.

        \item \textbf{Case 2:} $ e_k^0 $ is not an element of the intersection, meaning that fewer than $t$ parties hold this element.
    \end{itemize}

    \noindent\underline{\textit{Case 1: $ e_k^0 $ is an element of the intersection.}}
    
    \begin{enumerate}
        \item \textbf{Conditional secret sharing.} 
        The protocol proceeds in essentially the same way as in ET-OT-MP-PSI, except that the shared secret is a random value denoted by $e_k^{0'}$. In this phase, at least $t$ parties obtain correct secret shares.

        \item \textbf{Secret shares update.} 
        Each party $P_j$, where $j \in [1, n-1]$, generates a polynomial $f_{j,b}(\cdot)$ and directly sends $f_{j,b}(1)$ to $P_0$. $P_0$ then uses this value to correctly update its share. Subsequently, $P_0$ invokes OLE protocol with $P_j$, in which $P_j$ obtains the following OLE output:
        $
        z_0^v = (r_1^v + \cdots + r_{n-1}^v) \cdot e_k^0 + (a_1^{0,v} + \cdots + a_{n-1}^{0,v}).
        $
        Similarly, each party $P_i$, where $i \in [1, n-1]$, invokes OLE protocol with $P_j$ and obtains the output:
        $
        z_1^v = - (r_1^v + \cdots + r_{n-1}^v) \cdot e_k^i + (a_1^{1,v} + \cdots + a_{n-1}^{1,v}).
        $
        $P_i$ then uses $z_1^v$ to update its share as follows:
        $
        \mu_k^{0,i} = \hat{s}_k^{0,i} + z_1^v.
        $
        \item \textbf{Conditional collection and reconstruction.} 
        Each party $P_i$, where $i \in [1, n-1]$, invokes OPPRF protocol with $P_0$ to conditionally deliver the updated share $\mu_k^{0,i}$. Since the element $e_k^0$ is held by at least $t$ parties, $P_0$ receives at least $t$ correct updated shares, i.e., $y_k^i = \mu_k^{0,i}$ for those parties.
        Then, $P_0$ adds each received value $y_k^i$ to the corresponding value $z_1^0$ to obtain the final updated share: $y_k^i = y_k^i + z_1^0$. Since at least $t$ out of the $n$ updated shares obtained by $P_0$ are correct, $P_0$ can successfully reconstruct the original secret by using these correct shares, i.e., ${Recon}(y_k^i) = e_k^{0'}$. Therefore, $P_0$ can correctly identify that $e_k^0$ is in
        the intersection, and the parties corresponding to the correct shares are the holders of this intersection element.

    \end{enumerate}

    \noindent\underline{\textit{Case 2:$ e_k^0 $ is not an element of the intersection.}}

     \begin{enumerate}
        \item \textbf{Conditional secret sharing.} 
        The process is identical to that of Case~1. However, since fewer than $t$ parties hold the element $e_k^0$, fewer than $t$ correct secret shares are obtained after the OPPRF.

        \item \textbf{Secret shares update.} 
        The procedure is exactly the same as in Case~1.

        \item \textbf{Conditional collection and reconstruction.}
         After the OPPRF execution, the number of correct updated shares obtained by $P_0$ is fewer than $t$. When $P_0$ attempts to reconstruct the secret using $n$ received values, it fails to recover the original secret with overwhelming probability under the given security parameters $\lambda$ and $\kappa$. Therefore, $P_0$ concludes that $e_k^0$ is not an element of the intersection.

    \end{enumerate}
\end{proof}
\subsubsection{Security}
\begin{proof}
    To prove the security, we consider the following two cases. Both Shamir's secret sharing and the OPPRF are assumed to operate over the same finite field \( \mathbb{F}_p \).
    Let $ \mathbb{C} $ and $ \mathbb{H} $ denote the colluding parties and honest parties, respectively.
    \begin{itemize}
        \item \textbf{Case 1:} Party $P_0$ is honest, and other parties are colluding.

        \item \textbf{Case 2:} Party $ P_i $ is honest, where $ i \in [1, n-1] $, while the remaining parties, including $ P_0 $, are corrupted.

    \end{itemize}

    \noindent\underline{\textit{Case 1: Party $P_0$ is honest.}}

     In this case, the simulator $\mathsf{Sim}$ is given the inputs $X$ and outputs $Y=\perp$ of the corrupted parties $\mathbb{C}$, and runs as follows:
     
    \begin{enumerate}
        \item \textbf{Conditional secret sharing.}  $\mathsf{Sim}$ samples $n-1$ random values $\overline{\hat{s}_k^{0,i}}\leftarrow\mathbb{F}_p$, as the simulated outputs of the OPPRF.

        \item \textbf{Secret shares update.} $\mathsf{Sim}$ selects random values $\overline{e_k^0}\leftarrow\mathbb{F}_p$ as the simulated inputs of $P_0$ and the corrupted parties for the OLE protocol.

        \item \textbf{Conditional collection and reconstruction.} As the corrupted parties obtain no inputs during this phase, $\mathsf{Sim}$ can simulate their views according to the protocol, resulting in views that are computationally indistinguishable from the real ones.
    \end{enumerate}

    Given the statistical security parameter $\lambda$ and the computational security parameter $\kappa$, both the OPPRF and our secret sharing scheme operate over the same finite field $\mathbb{Z}_p$. Now we argue that the views generated by $\mathsf{Sim}$ are computationally indistinguishable from those in the real execution.

    \begin{itemize}
        \item First, since both the OPPRF and Shamir's secret sharing operate over the same field $\mathbb{F}_p$, and due to the obliviousness of OPPRF, the value $\overline{\hat{s}_k^{0,i}}$ chosen by $\mathsf{Sim}$ is computationally indistinguishable from the real value $\hat{s}_k^{0,i}$ from the receiver's perspective. That is, $\hat{s}_k^{0,i} \overset{c}{\equiv} \overline{\hat{s}_k^{0,i}}$.

        \item In the real world, $P_0$'s inputs to the OLE are the set elements $e_k^0$, sampled uniformly from $\mathbb{F}_p$. Therefore, $e_k^0 \overset{c}{\equiv} \overline{e_k^0}$.
    \end{itemize}

     Therefore, we have $\{\mathsf{Sim}(X,Y,\mathbb{C})\} \overset{c}{\equiv} \{ \mathsf{view}^\pi_\mathbb{\mathbb{C}}(X,Y)\}$.

    \noindent\underline{\textit{Case 2: Party $P_0$ is corrupted.}}

    In this case, colluding parties learn the final outputs $I$ of the protocol. $\mathsf{Sim}$ is given the inputs $X$ and outputs $Y=I$ of the corrupted parties $\mathbb{C}$, and runs as follows:
    
    \begin{enumerate}
        \item \textbf{Conditional secret sharing.} 
        $\mathsf{Sim}$ samples $n-1$ random values $\overline{\hat{s}_k^{0,i}}\leftarrow\mathbb{F}_p$, as the outputs of the OPPRF.

        \item \textbf{Secret shares update.} For the $b^{\text{th}}$ bin, $\mathsf{Sim}$ constructs a random polynomial $\overline{f_{i,b}(\cdot)}$ with a constant term of 0 and a degree of at most $t-1$, and sets $\overline{f_{i,b}(1)}$ as the value used by the corrupted party $P_0$ to update its share.  
        If the element held by the corrupted party $P_0$ is $e_0$ and the element held by another corrupted party $P_j$ is $e_1$, $\mathsf{Sim}$ generates the simulated OLE outputs $\overline{c_0} = \overline{r_i} \cdot e_0 + \overline{a_i^0}$ for $P_0$ and $\overline{c_1} = -\overline{r_i} \cdot e_1 + \overline{a_i^1}$ for $P_j$, where $\overline{a_i^0} + \overline{a_i^1} = \overline{f_i(j+1)}$ and $\overline{r_i}$ are random values.

        \item \textbf{Conditional collection and reconstruction.} If $ e_k^0 \notin I $, or $ e_k^0 \in I $ but $e_k^0 \notin S_i$, $\mathsf{Sim}$ generates a random value to simulate the OPPRF output from the honest party $P_i$. If $ e_k^0 \in I$ and $e_k^0 \in S_i$, the simulator deduces the correct OPPRF output:      
        $$
            \overline{y_k^i} = s_k^{0,i}  -(\sum_{j=1,j\neq i}^{n-1} r_j +\overline{r_i})\cdot e_k^0 + 
            ( \sum_{j = 1,j\neq i}^{n-1} a_j^1 + \overline{a_i^1} ).
        $$
        Similarly, if $e_k^0 \notin I$, or $e_k^0 \in I$ but $e_k^0 \notin S_i$, then $\mathsf{Sim}$ selects a random value $\overline{v'}$ as the OLE index derived from the OPPRF.  
        If $e_k^0 \in I$ and $e_k^0 \in S_i$, then $\mathsf{Sim}$ can compute the correct OLE index.

    \end{enumerate}

    Given the statistical security parameter $\lambda$ and the computational security parameter $\kappa$, both the OPPRF and our secret sharing scheme operate over the same finite field $\mathbb{Z}_p$. Now we argue that the views generated by $\mathsf{Sim}$ are computationally indistinguishable from those in the real execution.

    \begin{itemize}
        \item According to the obliviousness of OPPRF, each $\hat{s}_k^{0,i}$ is indistinguishable from $\overline{\hat{s}_k^{0,i}}$ to the receiver, i.e. $\hat{s}_k^{0,i} \overset{c}{\equiv} \overline{\hat{s}_k^{0,i}} $.
        \item In the real world, polynomials ${f_{j,b}(\cdot)}$ are generated randomly, and each evaluation ${f_{j,b}(i+1)} \leftarrow\mathbb{F}_p$. Similarly, $\mathsf{Sim}$ generates values $\overline{f_{j,b}(i+1)} \leftarrow\mathbb{F}_p$, which means ${f_{j,b}(i+1)}  \overset{c}{\equiv} \overline{f_{j,b}(i+1)}$.
        According to the obliviousness of OLE, the receiver cannot computationally distinguish the OLE output from a uniformly random value over the same field $\mathbb{F}_p$. Therefore, the simulated outputs generated by $\mathsf{Sim}$ are computationally indistinguishable from those in the real execution, i.e., $c_0 \overset{c}{\equiv} \overline{c_0}$ and $c_1 \overset{c}{\equiv} \overline{c_1}$.

        \item $\mathsf{Sim}$ is given the final output $ I $ of the protocol. From this, $\mathsf{Sim}$ can determine the correct OPPRF outputs for each element in the intersection. If $ e_k^0 \notin I $, or $ e_k^0 \in I $ but $ e_k^0 \notin S_i $, then by the obliviousness of OPPRF, we have $ y_k^i \overset{c}{\equiv} \overline{y_k^i}$ and $ v' \overset{c}{\equiv} \overline{v'}$. If $ e_k^0 \in I $ and $ e_k^0 \in S_i $, then $\mathsf{Sim}$ can compute the correct OPPRF output. Consequently, $ y_k^i \overset{c}{\equiv} \overline{y_k^i} $ and $ v' \overset{c}{\equiv} \overline{v'}$.

    \end{itemize}

    Therefore, we have $\{\mathsf{Sim}(X,Y,\mathbb{C})\} \overset{c}{\equiv} \{ \mathsf{view}^\pi_\mathbb{\mathbb{C}}(X,Y)\}$.
    
\end{proof}

\end{document}